\documentclass[submission,copyright,creativecommons]{eptcs}
\providecommand{\event}{NCMA 2026} %

\usepackage{iftex}

\ifpdf
  \usepackage{underscore}         %
  \usepackage[T1]{fontenc}        %
\else
  \usepackage{breakurl}           %
\fi

\title{Efficient Regex Matching with Sparse Counting-Sets}

\newcommand{\titlerunning}{Efficient Matching with Sparse Counting-Sets}
\newcommand{\authorrunning}{
    M. Berglund, B. van der Merwe and S. Sung
}

\hypersetup{
  bookmarksnumbered,
  pdftitle    = {\titlerunning},
  pdfauthor   = {\authorrunning},
  pdfsubject  = {EPTCS},               %
  pdfkeywords = {formal languages, regular expressions, regular expression engines, pattern matching, counter automata} %
}

\author{
  Martin Berglund
  \institute{Umeå University\\ Umeå, Sweden}
  \email{mbe@cs.umu.se}
  \and
  Brink~van~der~Merwe%
  \institute{
    Stellenbosch University\\
    Stellenbosch, South Africa
  }
  \email{abvdm@sun.ac.za}\\
  \and
  Sicheol Sung\footnote{Authors are listed alphabetically. While this work represents a collaborative effort, Sicheol Sung served as the lead author.}
  \institute{Yonsei University\\ Seoul, Republic of Korea}
  \email{sicheol.sung@yonsei.ac.kr}
}

\makeatletter
\newcommand*\bigcdot{\mathpalette\bigcdot@{.5}}
\newcommand*\bigcdot@[2]{\mathbin{\vcenter{\hbox{\scalebox{#2}{$\m@th#1\bullet$}}}}}
\makeatother

\usepackage{placeins}
\usepackage[table]{xcolor}
\usepackage[inline]{enumitem}
\usepackage[noEnd,indLines]{algpseudocodex}
\usepackage{algorithm}
\usepackage{amsfonts}
\usepackage{amsmath}
\usepackage{amssymb}
\usepackage{amsthm}
\usepackage{booktabs}
\usepackage{cite}
\usepackage{cleveref}
\usepackage{comment}
\usepackage{graphicx}
\usepackage{import}
\usepackage{mathrsfs}
\usepackage{mathtools}
\usepackage{microtype}
\usepackage{multirow}
\usepackage{pgfplots}
\usepackage{pifont}
\usepackage{subfig}
\usepackage{tabularx}
\usepackage{tikz}
\usepackage{tikzpagenodes}
\usepackage{url}
\usepackage{placeins}
\usepackage[commandnameprefix=always]{changes}

\pgfplotsset{compat=1.18}

\newcommand*{\regex}[1]{{\texttt{{#1}}}}

\newcommand*{\partialto}{{\,\rightharpoonup\,}}

\newcommand*{\init}{{\mathsf{init}}}

\newcommand*{\Guard}{{\mathsf{Guard}}}
\newcommand*{\Action}{{\mathsf{Action}}}
\newcommand*{\Assign}{{\mathsf{Assign}}}

\newcommand*{\nop}{{\mathsf{nop}}}
\newcommand*{\inc}{{\mathsf{inc}}}
\newcommand*{\reset}{{\mathsf{reset}}}
\newcommand*{\inact}{{\mathsf{inact}}}

\newcommand*{\opcheck}{{\mathsf{check}}}
\newcommand*{\addone}{{\mathsf{add}_1}}
\newcommand*{\merge}{{\mathsf{merge}}}

\newcommand*{\llist}[1]{{\langle {#1} \rangle}}

\newcommand*{\head}{{\mathsf{head}}}

\newcommand*{\sparseaddone}{{\mathsf{sparseAdd}_1}}
\newcommand*{\sparsemerge}{{\mathsf{sparseMerge}}}

\newcommand*{\To}{\Rightarrow}
\newcommand*{\powerset}[1]{{2^{#1}}}
\newcommand*{\N}{{\mathbb{N}}}

\newcommand*{\llb}{{[\![}}
\newcommand*{\rrb}{{]\!]}}
\newcommand*{\eval}[1]{{\llb{}{#1}\rrb{}}}
\newcommand*{\apply}[2]{{\eval{#2}_{#1}}}

\newcolumntype{Y}{>{\centering\arraybackslash}X}
\newcolumntype{Z}{>{\raggedleft\arraybackslash}X}

\theoremstyle{plain}

\newtheorem{theorem}{Theorem}
\newtheorem{lemma}[theorem]{Lemma}
\newtheorem{corollary}[theorem]{Corollary}
\newtheorem{proposition}[theorem]{Proposition}
\newtheorem{property}[theorem]{Property}

\theoremstyle{definition}
\newtheorem{definition}[theorem]{Definition}

\theoremstyle{remark}
\newtheorem{remark}[theorem]{Remark}
\newtheorem{example}[theorem]{Example}

\pgfplotsset{compat=1.18}
\usepgfplotslibrary{statistics}

\usetikzlibrary{patterns}
\usetikzlibrary{arrows.meta}
\usetikzlibrary{decorations.pathreplacing}
\usetikzlibrary{decorations.markings}
\usetikzlibrary{shapes.geometric,shapes.misc}
\usetikzlibrary{backgrounds,calc,fit,positioning}
\usetikzlibrary{automata, positioning, arrows}
\tikzset{
    state/.default={8mm},
    state/.style={%
        circle,%
        draw,%
        fill=white,
        align=center,
        inner sep=0pt,
        text width=#1%
    },
    >={Triangle[angle=45:4pt,fill=black]},
    accepting/.append style={double distance=2pt},
    initial text={}
}

\begin{document}

\maketitle

\begin{abstract}
    Regular expressions with counting operations~(c-regexes) offer a compact
    representation of repeating patterns by allowing numerical bounds to be
    added to subexpressions. Recent work introduced the counting-set data
    structure, which allows simultaneous updates of multiple counter values for
    efficient matching. However, this approach suffers from a performance
    bottleneck when counting-sets must be replicated due to the presence of
    branching transitions. We propose a sparse counting-set approach, which
    reduces the replication overhead by maintaining only essential counter
    values, thereby yielding a more efficient matching algorithm. 
\end{abstract}

\section{Introduction}

Counting operations make it possible to express bounded repetition, indicating
that something happens at least $l$ times but no more than $h$ times. Matching
strings using \emph{regular expressions with counting operations}~(c-regexes)
can be slow due to the additional complexity introduced by counting operations.
Two typical approaches are used for matching with c-regexes. The first is
\emph{counter-expansion}, employed by RE2~\cite{Cox10}, which rewrites c-regexes
into equivalent regexes without counting. However, this approach becomes
inefficient when the bounds of counting operations are large, as it
constructs duplicate states for each possible counter
value~\cite{TuronovaHLSVV20}. The second approach uses counter automata
(CAs)~\cite{GeladeGM12}; for example, variants of it are used in the Java regex
engine. This approach tracks the number of repetitions using counter variables,
but in the Java matcher, this approach inherits the inefficiencies of
backtracking matchers~\cite{TuronovaHHLVV22,DavisCSL18}.

During matching, a CA constructed from a c-regex~$r$ may generate a number of
counting configurations, linear in the bounds of the counting operators in $r$.
Thus, the worst-case matching time of Thompson-like matching algorithms, that
is, matching algorithms following a breadth-first approach, on CAs depends on
the bounds of the counting operators, which can be large. 
Since the bounds are
in practice encoded in decimal (as opposed to unary), a counter
range~$\{l, h\}$ contributes only~$O(\log h)$ characters to the textual length
of $r$, yet may induce $\Theta(h)$ distinct counter configurations during
matching. Consequently, the matching time can be exponential in the textual
length of $r$.

The \emph{counting-set} data structure~\cite{TuronovaHLSVV20,HolikSTV23} has
been proposed to address this inefficiency by allowing the simultaneous
application of the increment-by-one counter operation on multiple counter
values. However, the cost of replicating counting-sets at branching transitions
is expensive.
In this work, we address this problem by introducing a \emph{sparse
counting-set} structure, which reduces the size of counting-sets by keeping
track of only essential counter values. In summary, our contribution is not a
new automaton model, but a sparse representation for counting-sets with
preserved acceptance behavior and improved bounds on d-sparse flat c-regexes.

\paragraph{Related work.}

Compact handling of bounded repetition in pattern matching has a long history.
XFA~\cite{SmithEJ08} augments finite-state automata with auxiliary variables,
including counters, and is motivated by intrusion detection signatures with
heavily repeating subpatterns. Bj{\"{o}}rklund et al.~\cite{BjorklundMT15} study
the incremental evaluation of succinct regular expressions with numerical
constraints, giving bounds for maintaining matches under edits. In the
non-backtracking matching literature closer to our setting, counter
automata~\cite{GeladeGM12,TuronovaHLSVV20}, counting-set
automata~\cite{HolikSTV23,TuronovaHLSVV20}, and bit-vector
automata~\cite{GlaunecKM23} each trade expressiveness for efficiency
differently. Our work stays within the counting-set framework
of~\cite{HolikSTV23,TuronovaHLSVV20} and targets its main remaining bottleneck:
the cost of replicating large counting-sets at branching transitions, which we
eliminate through a sparse representation.

\section{Preliminaries}
\label{section:preliminaries}

By $\N$ we denote the set of positive integers and let $\N_0$ and $\N_{\infty}$
be $\N$ union the singleton sets $\{0\}$ and~$\{\infty\}$, respectively. We
assume that $n<\infty$ for all $n \in \N_0$. If $M, N \subseteq \N_0$ and~$k \in
\N_0$, then $k \circleddash M$, $N \circleddash k$ and $M \circleddash N$ denote
$\{k \circleddash m \mid m \in M\}$, $\{ n \circleddash k \mid n \in N \}$ and
$\{ m \circleddash n \mid m \in M, n \in N \}$, respectively, for $\circleddash$
any binary operator on $\N_0$. For $l \in \N_0$ and $h \in \N_\infty$, with
$l\le h$, we denote by $[l,h]$ the interval given by $\{n \in \N_0 \mid l \le n
\le h\}$. Thus, to simplify our presentation, we adopt the convention of denoting all integers greater than or equal to $l$, as $[l,\infty]$,
instead of $[l,\infty)$.

We denote a partial function~$f$ from a set~$X$ to a set~$Y$ by $f: X\partialto
Y$. A partial function~$f: X\partialto Y$ is often represented as~$\{x \!:
f(x)\}_{x \in X}$. For example, $\{a\!:x, b\!: y\}$ denotes a function~$f$ such
that $f(a) = x$, $f(b) = y$ and $f(x)$ is undefined for all other $x \in X$. A
relation~$\to$ on sets~$A$ and $B$ is \emph{functional} if for~$a \in A$ there
exists at most one $b \in B$ such that~$a \to b$.

An alphabet~$A$ is a set of characters, and a string~$w = a_1 a_2 \cdots a_n$
over $A$ is a finite sequence of characters with $a_i\in A$. The length~$|w|$ of
the string~$w$ is the number~$n$ of characters in $w$. Similar notation is
used to denote the cardinality of a finite set $B$, that is, $|B|$, but given
notational conventions, this will not create confusion. The empty string of
length zero is denoted by $\varepsilon$. For strings~$u = a_1 a_2 \cdots a_n$
and $v = b_1 b_2 \cdots b_m$, their concatenation~$u \cdot v$ is the string~$a_1
a_2 \cdots a_n b_1 b_2 \cdots b_m$. Languages over the alphabet~$A$ are sets of
strings over $A$. For two languages~$L_1$ and $L_2$, their concatenation $L_1
\cdot L_2$ is defined as $\{ uv \mid u \in L_1, v \in L_2\}$. For a
language~$L$, we define $L^0 = \{ \varepsilon \}$ and $L^{i+1} = L^i \cdot L$
for all $i \ge 0$. The Kleene star and plus of $L$ are defined by $L^* = \bigcup_{i \ge 0} L^i$ and
$L^+ = \bigcup_{i \ge 1} L^i$, respectively. Thus, $L^* = L^+\cup \{\varepsilon\}$.

\subsection{Symbolic regular expressions with counter operators}

In practice, regexes extensively use character classes, such as
\texttt{\textbackslash d}, which are equivalent to \texttt{(0|1|$\cdots$|9)} in
Perl-compatible regular expression syntax. That is, \texttt{\textbackslash d} is a predicate
matching a set of characters. To harmonize with this, we operate on symbolic
regular expressions in which counter operators might also be present. 
Readers more comfortable with ordinary regular expressions over an
alphabet~$A$ may, throughout, read each predicate~$\sigma$ simply as a single
character of~$A$; the algorithmic content does not depend on the predicate
machinery beyond the membership test $a \in \eval{\sigma}$.
We begin with the definition of a Boolean algebra---see~\cite{symlearn} for more details.

\begin{definition}
    An \emph{effective Boolean algebra} is a tuple $(A, \Sigma, \eval{\_}, \bot,
    \top, \lor, \land, \neg)$, where: (1)~$A$ is the \emph{domain}; (2)~$\Sigma$
    is a set of predicates over $A$ that is closed under the Boolean
    connectives~$\lor$, $\land$ and $\neg$, with~$\bot, \top \in \Sigma$;
    (3)~$\eval{\_}: \Sigma \rightarrow 2^{A}$ is a denotation function such that
    $\eval{\bot} = \emptyset$ and $\eval{\top} = A$, and for~$\varphi, \psi \in
    \Sigma$, $\eval{\psi \lor \varphi} = \eval{\psi} \cup \eval{\varphi}$,
    $\eval{\psi \land \varphi} = \eval{\psi} \cap \eval{\varphi}$, and
    $\eval{\neg \varphi} = A \setminus \eval{\varphi}$.
\end{definition}

When the domain~$A$ is an alphabet, each predicate~$\sigma$ defines the
set~$\eval{\sigma} \subseteq A$ of symbols satisfying the predicate, which are the
character classes of practical regular expressions.

\begin{definition}
    A \emph{symbolic regular expression with counting operations~(c-regex)} over
    $\Sigma$ is defined inductively as follows: (1)~$\varepsilon$ and $\sigma
    \in \Sigma$ are c-regexes; (2)~$r_1 \cdot r_2$ and $r_1 \mid r_2$ are
    c-regexes for c-regexes~$r_1$ and $r_2$; (3)~$r^*$, $r^+$ and $r^?$ are
    c-regexes for a c-regex~$r$; and (4)~for a c-regex~$r$ and $l \in \N_0$ and
    $h \in \N_{\infty}$, with $l\le h$, we also have that $r^{\{l, h\}}$ is a
    c-regex.
\end{definition}

By~$L(r)$ we denote the language associated with a c-regex~$r$, which is defined
inductively as follows. We let $\sigma\in\Sigma$ and~$r_1$ and $r_2$ be
c-regexes. Then, we define (1)~$L(\varepsilon) = \{ \varepsilon \}$ and
$L(\sigma) = \eval{\sigma}$;
(2)~$L(r_1 \cdot r_2) = L(r_1) \cdot L(r_2)$ and $L(r_1 | r_2) = L(r_1) \cup
L(r_2)$; (3)~$L(r_1^*) = L(r_1)^*$, $L(r_1^+) = L(r_1)^+$ and $L(r_1^?) = L(r_1)
\cup \{ \varepsilon \}$; and (4)~$L(r_1^{\{l,h\}}) = \bigcup_{i \in [l,h]}
(L(r_1))^i$.

We use $|r|_\Sigma$ to denote the number of occurrences of predicates in $r$.
The \emph{counting-height} of the c-regex~$r$ is the maximum depth of nested
counting operators in $r$. If $r$ contains no counter operations, we define the counting-height of $r$ as $0$.
A c-regex is \emph{flat} if the counting-height of
the regex is at most $1$. Let $H(r)$ denote the largest finite bound in a counter
range in $r$~(i.e.\ every subexpression $s^{\{l,h\}}$ has $l\le H(r)$ and,
if $h\not =\infty$, then $h\le H(r)$). 

We use a typewriter font to denote specific c-regex instances for readability.
For example, we write `\regex{(a|bc)\{2,10\}}' instead of `$(a | bc)^{\{2,
10\}}$'. Furthermore, although we may avoid infinite counter ranges by rewriting~$r^{\{l,\infty\}}$ as $r^{\{l,l\}}r^*$, we will see in Section~\ref{section:sparse-c-config}
that this rewrite harms matching performance when using sparse
counting-sets.

\subsection{Counter automata}

Counter automata (CAs) are an extension of nondeterministic finite-state
automata designed to match c-regexes~\cite{KongYCGHMY22,TuronovaHLSVV20}. Our
definition closely aligns with the formalism presented by Turonova et
al.\!~\cite{TuronovaHLSVV20}.

Let $C = \{c_1, c_2, \ldots, c_k \}$ be a finite set of \emph{counter
variables}. A function $\alpha: C \to \N_0$ is called an \emph{assignment} for
$C$, and the set of all assignments for $C$ is denoted as $\Assign(C)$. A
\emph{guard} over $C$ is any function~$\psi: c_i \in C \mapsto \psi_i$, where
$\psi_i$ is a Boolean combination of predicates of the form $\top$~(true),
$\bot$~(false), $c_i \le h$ and $l \le c_i$ with $l\in \N_0$ and $h\in\N_\infty$ (note, the predicate $c_i\le\infty$ is always satisfied). An assignment~$\alpha$ \emph{satisfies} $\psi$ if, for every
counter variable~$c_i$, the boolean expression~$\psi_i(n_i)$, which is obtained
from $\psi_i$ by substituting $c_i$ with its value $n_i = \alpha(c_i)$ in
$\psi_i$, evaluates to true; we denote this by $\alpha \models \psi$. The set of
all guards over $C$ is denoted by $\Guard(C)$.

An \emph{action} over $C$ is a function~$\theta: c_i \mapsto \theta_i(c_i)$,
where $\theta_i \in \{ \nop, \reset, \inc, \inact \}$, and (1)~$\nop: n \mapsto
n$, (2)~$\reset: n \mapsto 1$, (3)~$\inc: n \mapsto n+1$, and (4)~$\inact: n
\mapsto 0$. We denote by~$\apply{\theta}{\alpha}$ the assignment obtained by
applying $\theta$ to~$\alpha$. Thus, $\apply{\theta}{\alpha}(c_i) :=
\theta_i(\alpha(c_i))$. The set of all actions over $C$ is denoted as
$\Action(C)$.

\begin{example}
    Consider the assignment $\alpha = \{c_1: 2, c_2: 3\}$. We illustrate guards
    and actions using the following example.
    \begin{description}
        \item[Guards:] For guards $\psi = \{ c_1 : c_1 \le 2, c_2 : c_2 \ge 3
        \}$ and $\psi' = \{ c_1 : c_1 \ge 2, c_2 : c_2 \ge 4  \}$, we have that
        $\alpha \models \psi$ and $\alpha \not\models \psi'$. We implicitly
        conjoin the predicates in a guard; that is, we interpret a comma as a
        conjunction~$\land$.  
    
        \item[Actions:] For actions $\theta = \{ c_1: \nop, c_2: \reset \}$ and
        $\theta' = \{ c_1: \inc, c_2: \inact \}$, applying them to $\alpha$
        results in $\apply{\theta}{\alpha} = \{c_1: 2, c_2: 1\}$ and
        $\apply{\theta'}{\alpha} = \{c_1: 3, c_2: 0\}$.
    \end{description}
\end{example}

\begin{definition}
    A \emph{nondeterministic symbolic counter automaton}~(CA) is a tuple~$M =
    (Q, \Sigma, C, q_\init, \Delta, F)$, where (1)~$Q$ is a finite set of
    \emph{states}; (2)~$\Sigma$ is a finite set of predicates over an effective
    Boolean algebra; (3)~$C$ is a finite set of counter variables; (4)~$q_\init$
    is the \emph{initial state}; (5)~$\Delta \subseteq Q \times \Sigma \times
    \Guard(C) \times \Action(C) \times Q$ is the set of \emph{transitions}; and
    (6)~$F: Q \to \Guard(C)$ is the \emph{acceptance guard}.
\end{definition}
It may be useful to look ahead to Example~\ref{example:config-relation} with
Figure~\ref{fig:ca-construction} for a CA as we complete the definitions. For a
string~$w$, we define when a CA~$ M$ accepts~$ w$. A
\emph{configuration}~(config) of $M$ is a pair~$(q, \alpha) \in Q
\times \Assign(C)$. We define a \emph{transition relation}~$\to_{(M,a)}$, for a
character~$a \in A$, over configs, as follows:
\[
    (p, \alpha) \to_{(M,a)} (q, \beta) \iff \exists (p,\sigma, \psi, \theta, q)
    \in \Delta: \alpha \models \psi \land a \in \eval{\sigma} \land \beta =
    \apply{\theta}{\alpha}.
\]
Then, for a string~$w=a_1 a_2 \cdots a_n \in A^*$ with $a_i \in A$, we define
$\to_{(M,w)}$ as follows:
\[
    (q_0, \alpha_0) \to_{(M,w)} (q_n, \alpha_n)
    \iff (q_0, \alpha_0) \to_{(M,a_1)} (q_1, \alpha_1) \to_{(M,a_2)} \cdots
    \to_{(M,a_n)} (q_n, \alpha_n).
\]
We write $\to_w$ for $\to_{(M, w)}$ when $M$ is clear from the context.

The \emph{initial config} of $M$ is~$(q_\init, \alpha_\init)$, where
$\alpha_\init(c_i) = 0$ for all $c_i \in C$, and \emph{final configs} of $M$ are
configs~$(q, \alpha)$ satisfying $\alpha \models F(q)$. We call a config~$(q,
\alpha)$ \emph{reachable} if $(q_\init, \alpha_\init) \to_w (q, \alpha)$ for
some string~$w \in A^*$. If the config~$(q, \alpha)$ is final, then we say $M$
\emph{accepts} $w$. We define $L(M) := \{w \in A^* \mid \text{$M$ accepts
$w$}\}$.

\begin{figure}
    \centering
    \begin{tikzpicture}
       \node[state, initial] (0) {$q_\init$};
       \node[state] (1) [right=2.5 of 0] {$q_1$};
       \node[align=center] (1') [below=0.5 of 1] {$c: 2 \le c$};
       \node[state] (2) [right=2.5 of 1] {$q_2$};
       \node[align=center] (2') [right=0.5 of 2] {$c: 2 \le c$};
    
       \path[->] 
        (0) edge [above] node
            {$\texttt{a}; c: \top, \reset$} (1)
        (1) edge [bend left=15, above] node
            {$\texttt{a}; c: \top, \nop$} (2)
        (2) edge [bend left=15, below] node
            {$\texttt{a}; c: c < 100, \inc$} (1)
        (1) edge [loop above] node
            {$\texttt{a}; c: c < 100, \inc$} ()
        (2) edge [loop above] node
            {$\texttt{a}; c: \top, \nop$} ()
        (2) edge (2')
        (1) edge (1')
        ;
    \end{tikzpicture}
    \caption{\label{fig:ca-construction}
        A CA equivalent to \regex{(aa*)\{2,100\}} with a single counter
        variable~$c$.
    }
\end{figure}

\begin{example}\label{example:config-relation}

Consider a c-regex~$r=\regex{(aa*)\{2,100\}}$. In reality this is equivalent to the expression $\regex{aaa*}$, but let us work out the counter use. For the CA~$M$ in
\Cref{fig:ca-construction}, $L(r) = L(M)$, that is, $M$ is equivalent to $r$.
The initial state always reads an $a$ and goes to $q_1$, which corresponds to
being `at' the $a^*$ in the expression. There the options are: reading another
$a$ by using the Kleene star~(going to $q_2$ without changing the count $c$);
reading another $a$ by reentering the counter operation~(incrementing $c$, only
possible if $c<100$); or accepting~(possible only if $c\ge 2$). Note that
$\text{aaa} \in L(r)$. The following are the sequences of transition relations
for $(q_\init, \{c: 0\}) \to_\texttt{aaa} (q, \alpha)$, for configs~$(q,
\alpha)$ reachable when reading~$\texttt{aaa}$. 
\begin{enumerate}
    \item $
        (q_\init, \{c: 0\})
        \to_\texttt{a} (q_1, \{c: 1\})
        \to_\texttt{a} (q_1, \{c: 2\})
        \to_\texttt{a} (q_1, \{c: 3\})
    $
    \item $
        (q_\init, \{c: 0\})
        \to_\texttt{a} (q_1, \{c: 1\})
        \to_\texttt{a} (q_1, \{c: 2\})
        \to_\texttt{a} (q_2, \{c: 2\})
    $
    \item $
        (q_\init, \{c: 0\})
        \to_\texttt{a} (q_1, \{c: 1\})
        \to_\texttt{a} (q_2, \{c: 1\})
        \to_\texttt{a} (q_1, \{c: 2\})
    $
    \item $
        (q_\init, \{c: 0\})
        \to_\texttt{a} (q_1, \{c: 1\})
        \to_\texttt{a} (q_2, \{c: 1\})
        \to_\texttt{a} (q_2, \{c: 1\})
    $
\end{enumerate}
We have that $\texttt{aaa} \in L(M)$ since $(q_1, \{c: 3\}), (q_2, \{c: 2\})$ and $(q_1, \{c:
2\})$ are final configs. 
\end{example}

\subsection{Thompson-like matching algorithm}

For a c-regex~$r$ and a string~$w \in A^*$, a Thompson-like matching algorithm
decides whether $w \in L(r)$, as follows: Consider a CA equivalent to $r$, which
we denote by~$M(r)$. A \emph{super-config} of $M$ refers to a set of configs of
$M(r)$. Next, we define the functional relations~$\To_a$ and $\To_w$, over
super-configs, for each super-config~$S$, character~$a \in A$ and $w = a_1 a_2
\cdots a_n \in A^*$, as follows:
\begin{align*}
    S \To_{(M, a)} \{
        (q, \beta) \in Q \times \Assign(C) \mid \exists (p, \alpha) \in S: (p,
        \alpha) \to_{(M, a)} (q, \beta)
    \}, \\
    (S \To_{(M, w)} S_n)
        \iff (S \To_{(M, a_1)} S_1 \To_{(M, a_2)} S_2 \To_{(M, a_3)} \cdots
        \To_{(M, a_n)} S_n).
\end{align*}
We write $\To_w$ for $\To_{(M, w)}$ when $M$ is clear from the context.

Note that if $\{ (p, \alpha) \} \To_w T$, then $(p, \alpha) \to_w (q, \beta)$ if
and only if $(q, \beta) \in T$. Finally, $w \in L(M)$ if $S_\init \To_w S_n$,
where $S_\init := \{(q_\init, \alpha_\init)\}$ is the initial super-config, and
$S_n$ contains a final config.

\begin{example}\label{example:super-config-relation}
    For the CA~$M$ in \Cref{fig:ca-construction}, we have the following
    derivation of super-configs.
    \begin{align*}
        \{ (q_\init, \{c: 0\}) \}
        &\To_\texttt{a} \{ (q_1, \{c: 1\}) \} \\
        &\To_\texttt{a} \{ (q_1, \{c: 2\}), (q_2, \{c: 1\}) \} \\
        &\To_\texttt{a} \{
            (q_1, \{c: 3\}), (q_2, \{c: 2\}), (q_1, \{c: 2\}), (q_2, \{c: 1\})
        \}.
    \end{align*}
    Note the correspondence between this derivation and the derivation sequences
    given in Example~\ref{example:config-relation}.
\end{example}

\subsection{Flat counter automata}

Next, we consider the \emph{position CA} constructed from a c-regex~$r$ with the
position construction as in~\cite{GeladeGM12}; for example,
\Cref{fig:position-construction} shows the position CA of the c-regex
\regex{a((bc)\{2,10\})+d\{3,3\}}. The position CA of $r$ has
$|r|_{\Sigma}$~states (where $|r|_{\Sigma}$ denotes the number of predicates
from $\Sigma$ in $r$) and at most ${(|r|_{\Sigma})}^2$ transitions, and it has the
properties outlined next~(see~\cite{HolikSTV23}). We begin with two required
definitions.

\begin{definition}[See~\cite{TuronovaHLSVV20}]
    For a counter variable~$c_i \in C$, the
    \emph{scope}~$Q_i \subseteq Q$ of $c_i$ is the smallest subset of $Q$ defined
    inductively such that $q\in Q_i$ if there exists a transition~$(p, \sigma,
    \psi, \theta, q) \in \Delta$ such that either (1)~$\theta_i = \reset$, or
    (2)~$p \in Q_i$ and $\theta_i \neq \inact$.
\end{definition}

For each state~$q \in Q$, we denote by $C_q := \{ c_i \in C \mid q \in Q_i \}$
the set of counter variables whose scope contains $q$. Furthermore, we define
$Q_0 := Q \setminus \bigcup_{c_i \in C} Q_i$ as the set of states outside the
scope of any counter. Consequently, $C_q = \emptyset$ for any~$q \in Q_0$.

\begin{definition}[Transition types]\label{definition:transition-types}
A transition type for a counter variable~$c_i \in C$ is a guard-action
pair~$(\psi(c_i), \theta(c_i))$ that falls into one of the following five
types:
\begin{enumerate*}[
    label=\textup{T\arabic*}:, start=0, ref=\textup{T\arabic*}
]
    \item \label{item:type-0} $(\top, \nop)$,
    \item \label{item:type-1} $(\top, \reset)$,
    \item \label{item:type-2} $(c_i < h_i, \inc)$,
    \item \label{item:type-3} $(l_i \le c_i, \reset)$, and
    \item \label{item:type-4} $(l_i \le c_i, \inact)$.
\end{enumerate*}
\end{definition}

The following two properties generalize the property presented in Appendix~B.1
of Hol{\'{\i}}k et al.~\cite{HolikSTV23}.

\begin{property}\label{property:transition}
    For any transition~$(p, \sigma, \psi, \theta, q) \in \Delta$ and counter
    variable~$c_i \in C$ of a position CA, the corresponding pair~$(\psi(c_i),
    \theta(c_i))$ must be one of the five transition types in
    Definition~\ref{definition:transition-types}, determined by the relation of state~$p$
    and~$q$ to the counter's scope~$Q_i$:
    \begin{enumerate}
        \item if $p \notin Q_i \land q \notin Q_i$ then $(\psi(c_i),
        \theta(c_i))$ is \ref{item:type-0};
        \item if $p \notin Q_i \land q \in Q_i$ then $(\psi(c_i), \theta(c_i))$
        is \ref{item:type-1};
        \item if $p \in Q_i \land q \in Q_i$ then $(\psi(c_i), \theta(c_i))$ is
        \ref{item:type-0}, \ref{item:type-2} or \ref{item:type-3}.
        \item if $p \in Q_i \land q \notin Q_i$ then $(\psi(c_i), \theta(c_i))$
        is \ref{item:type-4}.
    \end{enumerate}
\end{property}
Each of these follows naturally from the structure of the counter scopes. For example, in~1 we are outside the scope, so the counter has no effect,~\ref{item:type-0}; cases~2 and~3 enter and exit the scope. Only case~3 is somewhat complex, as the transition goes from a state inside the scope to another inside the scope. This can happen either without interacting with the counter operation at all, giving us~\ref{item:type-0}; by iterating the counter operation once, giving us~\ref{item:type-2}; or by exiting the counter operation and reentering it, using some \emph{enclosing} closure, resetting the counter in~\ref{item:type-3}.

\begin{property}\label{property:scope}
    For any config~$(q, \alpha)$ of the position CA and a counter variable~$c_i
    \in C$, $\alpha(c_i) \ne 0$ if and only if $q \in Q_i$.
\end{property}

We focus on the position CAs of flat c-regexes, referred to as \emph{flat
CAs}. In a flat CA, $|C_q| = 1$ for~$q \notin Q_0$ and $|C_q| = 0$ otherwise. 
In the flat case, we simplify the notation of a config~$(q, \alpha)$ as follows:
If $q \in Q_i$ for some $c_i \in C$, then the config is written as $(q,
\alpha(c_i))$; if $ q \in Q_0$, it is written as~$(q, 0)$. Note that every
c-regex can be converted into a flat c-regex using
\emph{counter-expansion}~\cite{Cox10} where we perform the expansion for all but
one of the nested counters.

\begin{figure}
    \centering
    \begin{tikzpicture}[]
        \node[state, initial] (0) {$q_\init$};
        \node[state] (1) [below=1.5 of 0] {$\texttt{a}_1$};
        \node[state] (2) [right=3.2 of 1] {$\texttt{b}_2$};
        \node[state] (3) [above=1.5 of 2] {$\texttt{c}_3$};
        \node[state] (4) [right=2.6 of 3] {$\texttt{d}_4$};
        \node (4') [below=of 4, align=left, xshift=1em] {
            $c_1: \top$\\
            $c_2: 3 \le c_2$
        };

        \node[xshift=-2cm,yshift=.1cm,align=left] (ml1) at (3) {
            $c_1: c_1 < 10, \inc$, \\
            $c_2: \top, \nop$, \\
            (\ref{item:type-2}, \ref{item:type-0})
        };

        \node[xshift=-2cm,yshift=-1.4cm,align=left] (ml2) at (3) {
            $c_1: 2 \le c_1, \reset$, \\
            $c_2: \top, \nop$, \\
            (\ref{item:type-3}, \ref{item:type-0})
        };
        
        \node[
            fit=(2) (3), draw, rounded corners=2mm, inner sep=2mm,
            label=above:$Q_1$
        ] (group box) {};
        
        \node[
            fit=(4), draw, rounded corners=2mm, inner sep=2mm,
            label=above:$Q_2$
        ] (group box) {};
    
        \path[->]
        (0) edge node[left, align=left] {
            $c_1: \top, \nop$, \\
            $c_2: \top, \nop$, \\
            (\ref{item:type-0}, \ref{item:type-0})
        } (1)
        (1) edge node[below=2pt, align=left] {
            $c_1: \top, \reset$, \\
            $c_2: \top, \nop$, \\
            (\ref{item:type-1}, \ref{item:type-0})
        } (2)
        (2) edge[bend right] node[right=4pt, align=left] {
            $c_1: \top, \nop$, \\
            $c_2: \top, \nop$, \\
            (\ref{item:type-0}, \ref{item:type-0})
        } (3)
        (3) edge node[above=1pt, xshift=3pt, align=left, xshift=-0pt] {
            $c_1: 2 \le c_1, \inact$, \\
            $c_2: \top, \reset$, \\
            (\ref{item:type-4}, \ref{item:type-1})
        } (4)
        (4) edge ($(4'.north)-(1em,0)$)
        (3) edge[bend right] node[pos=0.3,inner sep=0,outer sep=0] (midpe1) {} (2)
        (3) edge[bend right=10] node[pos=0.5,inner sep=0,outer sep=0] (midpe2) {} (2)
        (4) edge[loop right, align=left] node {
            $c_1: \top, \nop$, \\
            $c_2: c_2 < 3, \inc$, \\
            (\ref{item:type-0}, \ref{item:type-2});
        } (2);

        \draw[gray] ($(ml1.center)+(7pt,-7pt)$) edge[-o,out=0,in=125] ($(midpe1.center)+(1.7pt,0)$);
        \draw[gray] ($(ml2.center)+(7pt,-7pt)$) edge[-o,out=0,in=200] ($(midpe2.center)+(2.1pt,0)$);
    \end{tikzpicture}
    \caption{\label{fig:position-construction}
        The position CA of \regex{a((bc)\{2,10\})+d\{3,3\}}. Recall that
        \texttt{+} is the ``one or more'' operator. Counter scopes~$Q_1$ and
        $Q_2$ are denoted by rounded boxes. The symbol read by a transition (or
        rather its predicate) is indicated on the target state (once subscripts are ignored). As is usual for
        position automata, there is one state for each literal predicate in the
        expression (plus an additional initial state).
    }
\end{figure}

\subsection{Counter configurations}

Next, we describe the counting-set data structure that is used to represent a
set of counter values assigned to a counter variable at each state, following
Turo\v{n}ov\'{a}~et~al.~\cite{TuronovaHLSVV20}; see also
Le~Glaunec~et~al.~\cite{GlaunecKM23} for a related bit-parallel approach.

\begin{definition}\label{def:counting-set}
    A \emph{counting-set}, which is a representation based on an offset and an
    ordered list, is a tuple~$s = (o, \ell, l, h)$, where:
    \begin{itemize}
        \item $l \in \N_0$ and $h \in \N_\infty$ with $l \le h$ denote the
        interval~$[l, h]$ representing the required number of repetitions;
        
        \item $o \in \N_0$ is the offset; and
        
        \item $\ell = \llist{\ell_1, \ell_2, \ldots, \ell_n}$ with $|\ell| = n$
        is an ordered list of strictly~decreasing values with each~$\ell_i \in
        \N_0$ and $o - \ell_i \in [0, h]$.
    \end{itemize}
    We only consider counting-sets that can be obtained from the initial
    counting-set, i.e.\ $(0, \llist{0}, l, h)$ (representing the set $\{0\}$), using
    the operations $\inc$, $\addone$, $\merge$, defined below.
    
    We use $\ell_\head$ to denote the last element~$\ell_n$ of $\ell$; when
    $\ell$ is empty, we define $\ell_\head := \bot$. A counting-set~$s$
    represents the set~$N_s:=\{ o - \ell_i \mid 1 \le i \le n \} \subseteq [0,
    h]$. With $s=(o, \ell, l, h)$, we define the following operations on
    counting-sets, taking counting-sets as inputs to output a Boolean value for
    $\opcheck$ and a new counting-set for $\inc$, $\addone$ and $\merge$:
    \begin{itemize}
        \item $\opcheck(s) := [\ell_\head \ne \bot \land l \le o - \ell_\head]$; this checks whether $N_s \cap [l, h] \ne \emptyset$;
        
        \item $\inc(s) := (o + 1, \ell', l, h)$ represents $1 + N_s$; when $(o
        + 1) - \ell_\head \le h$ (i.e.\ if $\ell_\head$ is smaller than the
        upper bound $h$) $\ell' := \ell$, otherwise $\ell' := \llist{\ell_1,
        \ell_2, \ldots, \ell_{n-1}}$;
    
        \item $\addone(s) := (o, \ell', l, h)$ is defined when $o>0$, and
        represents $N_s \cup \{1\}$; we have $\ell' := \ell$ if $\ell_1 = (o-1)$,
        and $\ell' := \llist{o - 1, \ell_1, \ell_2, \ldots, \ell_n}$, otherwise;
        
        \item $\merge(s_1, s_2) := (o_1, \ell', l, h)$ represents the union of
        $s_1=(o_1,\ell^1,l,h)$ and $s_2=(o_2,\ell^2,l,h)$ for $o_1 \ge o_2$,
        where $\ell'$ is the result of merging $\ell_1$ and the list
        \[
            \llist{(o_1 - o_2) + \ell^2_1, (o_1 - o_2) + \ell^2_2, \ldots, (o_1
            - o_2) + \ell^2_\head},
        \]
        by also removing duplicates.
    \end{itemize}
\end{definition}

We can compute $\opcheck(s)$, $\inc(s)$, and $\addone(s)$ in time
$O(1)$. 
The operation $\merge(s_1, s_2)$ runs in $O(o_2)$ time because $o_2$ bounds (from above) both the size of $\ell^2$ and the number of elements in $\ell^1$ that must be processed during the merge.
The operation $\merge(s_1, s_2)$ is defined as a destructive update where $s_1$ consumes $s_2$. Consequently, $s_2$ becomes unavailable for further use.
When
computing~$\merge(s_1, s_2)$, the number of visited elements in $\ell_1$ and
$\ell_2$ is bounded by $2 \cdot o_2$. Moreover, since the offset of $s_2$ is
$o_2$, this implies that there were $o_2$ calls to $\inc$ on $s_2$ before
merging. 
Combined with the fact that $s_2$ is consumed by $\merge(s_1, s_2)$, this
implies that the amortized time complexity of $\merge(s_1, s_2)$ is
constant~\cite{HolikSTV23}. \Cref{fig:counting-set} illustrates the structure of
a counting-set. 

\begin{figure}
\centering
\includegraphics[width=0.8\linewidth,page=1]{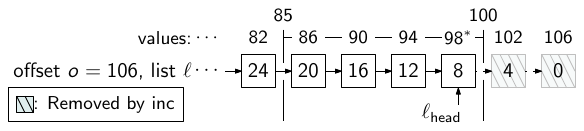}
\caption{A counting-set with $l = 85$ and $h = 100$.}
\label{fig:counting-set}
\end{figure}

Counting set automata use counting sets for efficient matching of c-regexes
with CAs~\cite{TuronovaHLSVV20,GlaunecKM23,HolikSTV23}. As a basis for our
proposed matching algorithm, we describe counter-configurations, which we
subsequently modify to obtain sparse counting-sets.

\begin{definition}\label{def:c-config}
For a flat CA we work with the simplified configs~$(q, n)$ introduced at the end of Section~\ref{section:preliminaries} (a flat CA has $|C_q| \le 1$ for
every state~$q$, so every assignment~$\alpha$ in a config reduces to a single value~$n = \alpha(c_i)$ if $q \in Q_i$, and to $n = 0$ if $q \in Q_0$). A \emph{counter configuration}~(c-config) is then a function~$f: Q \partialto\powerset{\N_0}$ that maps each state~$q$ to a set~$f(q)$ of counter values. A super-config~$S \subseteq Q \times \N_0$ is represented by the c-config~$f$
defined by $f(q) := \{ n \in \N_0 \mid (q, n) \in S \}$, with $f(q)$ undefined when no such pair occurs in~$S$.

\end{definition}

\begin{example}
The following c- and super-configs represent the same set of configurations.
\begin{enumerate}
\item[-] c-config~$f = \{ p\!: \{ 1, 2 \}, q\!: \{ 1, 2, 3 \},  r\!: \{ 1, 2, 3
\} \}$
\item[-] super-config~$S = \{ (p, 1), (p, 2), (q, 1), (q, 2), (q, 3), (r, 1), (r,
2), (r, 3) \}$
\end{enumerate}
\end{example}

\begin{definition}\label{def:c-config-transition}
Let $S$ and $T$ be super-configs of a flat CA such that $S \To_w T$ for some
string~$w \in A^*$. If $f$ and $g$ are c-configs that represent $S$ and $T$,
respectively, we write $f \To_w g$.
\end{definition}

Note that the counter values associated with a state~$q$ are placed in the
set~$f(q)$, which is represented using the counting-set structure. This
representation allows a c-config to increase all values simultaneously via the
operation~$\inc$.

\begin{figure}
\centering
\includegraphics[page=4,width=1.0\linewidth]{figures/counting-set.pdf}
\caption{\label{fig:replicating}
A schematic representation of the replication of a counting-set during a computation of the next
c-config~$g$ from a c-config~$f$. In this case, the counting-set~$f(p)$ is used to
compute both of the counting-sets~$g(q)$ and~$g(r)$.
}
\end{figure}

We define a flat c-regex~$r$ to be \emph{replicating} if its corresponding
position CA contains a state~$p \in Q_i$ with at least two outgoing transitions
whose operations~$\theta(c_i)$ are from $\{ \nop, \inc \}$ and whose
predicates~$\sigma_1$ and $\sigma_2$ are not disjoint (i.e., $\eval{\sigma_1}
\cap \eval{\sigma_2} \ne \emptyset$). \Cref{fig:replicating} illustrates a CA of
a replicating c-regex. In such a case, the matching algorithm in
\Cref{alg:counting-set} needs to replicate the counting-set~$s$ in order to
compute $s'$. This replication requires time linear in the size of $s$, which is
at most $H(r)$. Recall, $H(r)$ denotes the largest finite bound in a counter
range in $r$~(i.e., every subexpression $s^{\{l,h\}}$ has $l\le H(r)$ and, if
$h\not =\infty$, then $h\le H(r)$). Next, we use this observation on the cost of
replicating counting sets to analyse the time complexity of
\Cref{alg:counting-set}.  

\Cref{alg:counting-set} uses a counting-set data structure and c-configs to
match an input string against a CA obtained by applying the position automaton
construction to a flat c-regex. Notably, Line~\ref{line:condition-check} does
not explicitly check guards for $\nop$ and $\inc$. This is unnecessary because
(1)~the guard component that corresponds to $\nop$ and $\inc$ is always $\top$
and $c_i < h_i$ by Property~\ref{property:transition}, and (2)~counting-set's
$\inc$ operation implicitly filters out values that exceed the upper
bound~$h_i$.
For each state~$p$, the algorithm counts the number of outgoing transitions with
$\nop$ or $\inc$ operations in Line~\ref{line:replication-count}. If more than
one such transition exists, the corresponding counting-set is replicated in
Line~\ref{line:replication}. Finally, as guaranteed by
Property~\ref{property:transition}, the cases handled in
Lines~\ref{line:transition-0}, \ref{line:transition-1}, and
\ref{line:transition-2} are exhaustive and cover all possible transitions.
Although we will not provide a formal correctness argument, note that the natural correctness invariant for \Cref{alg:counting-set} is the following: After processing the first i characters of the input, the c-config $f$ maps each state $q$ to exactly the set of counter values $n$ such that the flat CA has a reachable config $(q,n)$ after that prefix.

For the proposition below to hold, we need to rewrite $r^{\{l, \infty\}}$ as $r^{\{l, l\}}r^*$.
Alternatively, one may provide special
treatment for the case $r^{\{l, \infty\}}$ (not given in the pseudocode listed for Algorithm~\ref{alg:counting-set}). For this, we split the transition for $(c_i < h, \inc)$ of T4 in Definition~\ref{definition:transition-types} into $(c_i < l, \inc)$ and $(c_i = l, \nop)$
if $h = \infty$, or by constraining the $\inc$ operation to have no effect once
a counter reaches its lower bound~$l$ (if $h = \infty$). 

\begin{proposition}[See~\cite{HolikSTV23}]\label{proposition:replicating}
    For a flat c-regex~$r$ and a string~$w$, \Cref{alg:counting-set} (modified for the case $r^{\{l, \infty\}}$) on the
    position CA of $r$ decides if $w \in L(r)$ in $O(H(r) \cdot
    (|r|_{\Sigma})^2 |w|)$-time. Furthermore, if $r$ is non-replicating, then~\Cref{alg:counting-set} runs in $O((|r|_{\Sigma})^2 |w|))$-time.
\end{proposition}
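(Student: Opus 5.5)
We prove two things: correctness, via the invariant stated before the proposition, and the running time, by bounding the work done per input character.

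\emph{Correctness.} We argue by induction on the length of the prefix processed. After $i$ characters, $f$ maps each state $q$ to exactly the set of counter values $n$ such that $(q,n)$ is reachable on $a_1\cdots a_i$; in other words, $f$ represents the super-config $S_i$ with $S_\init \To_{a_1\cdots a_i} S_i$ in the sense of Definition~\ref{def:c-config-transition}.

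For the inductive step, fix a state $p$ and a transition out of it. By Property~\ref{property:transition}, the guard--action pair on the counter of $p$ (or of the target $q$) is one of T0--T4. We check that each case is implemented by the matching counting-set operation.
\begin{itemize}
\item T0 ($\nop$) copies the set unchanged.
\item T2 ($\inc$) applies $\inc$. The guard $c_i<h_i$ is enforced implicitly, because $\inc$ drops the head value once it would exceed $h_i$.
\item T1 applies $\addone$ to the target.
\item T3 first tests $\opcheck$ and then applies $\addone$.
\item T4 tests $\opcheck$ and then yields the value $0$ (the target is outside every scope, by Property~\ref{property:scope}).
\end{itemize}
Contributions to the same target are combined with $\merge$, which realises set union. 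Together these give exactly the image of $S_i$ under $\To_{a_{i+1}}$. Acceptance is then read off by evaluating $F(q)$ with $\opcheck$ on each $f(q)$.

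For $r^{\{l,\infty\}}$, the modification (splitting into $(c_i<l,\inc)$ and $(c_i=l,\nop)$) keeps every stored value in $[0,l]$. This preserves the guards, since all values $\ge l$ behave identically. It also keeps counting-set sizes at most $H(r)+1$.

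\emph{Time bound.} The position CA has $|r|_\Sigma+1$ states and $O((|r|_\Sigma)^2)$ transitions. Per input character, the algorithm scans every transition once and performs $O(1)$ operations per transition: $\opcheck$, $\inc$ and $\addone$ are $O(1)$, and there is one $\merge$ or replication.

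A replication copies a counting-set of size at most $H(r)+1$, which costs $O(H(r))$. A merge costs $O(o_2)$; bounding the size of the consumed set by $H(r)+1$, it also costs $O(H(r))$ in the worst case. This gives $O(H(r)\cdot(|r|_\Sigma)^2)$ per character, and hence $O(H(r)\cdot(|r|_\Sigma)^2|w|)$ overall.

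\emph{Non-replicating case (main obstacle).} Here no state has two $\nop$/$\inc$ successors with overlapping predicates. So on any character, each counting-set is passed on by $\nop$/$\inc$ to at most one successor, and never needs to be copied. The only remaining non-constant costs are merges.

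We use the amortization of~\cite{HolikSTV23}. Assign each counting-set a potential equal to its offset $o$; each $\inc$ adds $1$ to this potential. $\merge(s_1,s_2)$ costs $O(o_2)$ and destroys $s_2$, releasing potential $o_2$. New sets arise only via T1/T3/T4, which create fresh sets of constant size.

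The delicate point is to verify that, without replication, potential is never duplicated. Each live set must therefore be charged to $\inc$ operations that occurred at most once each. There are at most $O((|r|_\Sigma)^2)$ $\inc$ operations per character, so the total merge cost over the whole run is $O((|r|_\Sigma)^2|w|)$. Adding the $O((|r|_\Sigma)^2)$ constant-time work per character yields the claimed bound $O((|r|_\Sigma)^2|w|)$.
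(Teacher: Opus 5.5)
Your proposal is correct and follows the same reasoning the paper relies on when it defers to Hol\'{\i}k et al.: in general each replication or merge costs $O(H(r))$ per transition, and in the non-replicating case no copies are ever made, so each $\merge$ is paid for by the offset (the $\inc$ calls) of the set it consumes. One caveat: for $r^{\{l,\infty\}}$ you should implement the modification as a single saturating $\inc$ (or use the rewrite $r^{\{l,l\}}r^*$), not as a literal split into $(c_i<l,\inc)$ and $(c_i=l,\nop)$. The split creates two $\nop$/$\inc$ transitions with the same predicate out of one state, so \Cref{alg:counting-set} would replicate there, contradicting your claim that no state has two such successors.
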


\begin{algorithm}
\caption{CA matching with counting-sets --- adapted from Holík et al.~\cite{HolikSTV23}.}
\label{alg:counting-set}
\begin{algorithmic}[1]
\State \textbf{Input:} position CA $M = (Q,\Sigma, C, q_\init, \Delta, F)$ of a
flat c-regex; input string $w = a_1 a_2 \cdots a_n \in A^*$
\State \textbf{Output:} \textbf{true} if $w \in L(M)$, \textbf{false} otherwise 
\Statex \textbf{Notation 1:} Because $M$ is flat, for every $q \in Q \setminus
Q_0$ there is a unique counter~$c_j \in C$ with $q \in Q_j$; we write $c(q) :=
c_j$ and $(l(q), h(q)) := (l_j, h_j)$ for its bounds, and set $c(q) := \bot$ for
$q \in Q_0$.
\Statex \textbf{Notation 2:} We write $\mathsf{replicate}(s)$ for a copy of the
counting-set~$s$, as depicted in \Cref{fig:replicating}.
\State $f \gets \{ q_\init: (0,\llist{0},0,0) \}$
\For{each character $a_i$ in $w$}
    \State $\textsf{counter\_ops} \gets \{ q: \emptyset \mid q \in Q \}$
    \State $\textsf{replications} \gets \{ p: 0 \mid p \in Q \}$
    \State $g \gets \{\}$
    \For{each key~$p$ of $f$ and transition~$(p, \sigma, \psi, \theta, q) \in \Delta$}
        \State $\theta' \gets \theta(c(q))$ if $c(q) \ne \bot$, and $\inact$ otherwise
        \If{\label{line:condition-check}
            $a_i \in \eval{\sigma}$ \textbf{and} $\left(\theta' \in \{ \nop, \inc \}~\textbf{or}~\opcheck(f(p))\right)$
        }
            \If{$\theta' \in \{\nop, \inc\}$}
                \State $\textsf{replications}(p) \gets \textsf{replications}(p) + 1$
                \label{line:replication-count}
            \EndIf
            \State $\textsf{counter\_ops}(q) \gets \textsf{counter\_ops}(q) \cup \{(p, \theta')\}$
        \EndIf
    \EndFor
    \For{each $q \in Q$ with $\textsf{counter\_ops}(q) \ne \emptyset$}
        \For{each $(p, \theta') \in \textsf{counter\_ops}(q)$}
            \If{\label{line:transition-0}
                $c(q) = \bot$ \textbf{and} $\theta' \in \{\inact, \nop\}$
            }
                \State $g(q) \gets (0, \llist{0}, 0, 0)$
            \ElsIf{\label{line:transition-1}
                $c(q) \ne \bot$ \textbf{and} $\theta' = \reset$
            }
                \State $g(q) \gets \addone(g(q))$ if $q$ is a key of $g$, and $\inc((0, \llist{0}, l(q), h(q)))$ otherwise
            \ElsIf{\label{line:transition-2}
                $c(q) \ne \bot$ \textbf{and} $\theta' \in \{ \nop, \inc \}$
            }
                \State $s \gets \mathsf{replicate}(f(p))$ if $\textsf{replications}(p) > 1$, and $f(p)$ otherwise
                \label{line:replication}
                \State $\textsf{replications}(p) \gets \textsf{replications}(p) - 1$
                \If{$\theta' = \inc$}
                    \State $s \gets \inc(s)$
                \EndIf
                \State $g(q) \gets \merge(g(q), s)$ if $q$ is a key of $g$, and $s$ otherwise
            \EndIf
        \EndFor
    \EndFor
    \State $f \gets \{q : g(q) \mid g(q) = (o, \ell, l, h), |\ell| > 0 \}$
\EndFor
\For{each $q \in F$ that is a key of $f$}
    \If{$\opcheck(f(q))$}
        \State \Return \textbf{true}
    \EndIf
\EndFor
\State \Return \textbf{false}
\end{algorithmic}
\end{algorithm}

\section{Sparse counting-sets}
\label{section:sparse-c-config}

Recall that the matching time complexity of replicating c-regexes depends on the
maximum size of counting-sets. We propose a new data structure, the \emph{sparse
counting-set}, which stores only \emph{useful} values. For example, for an
expression $E^{\{5,10\}}$ the sets $\{4,5,6,7,8,9,10\}$ and $\{4,10\}$ are
indistinguishable under the counting operators; they both satisfy $\opcheck$,
and do so when applying $\inc$ up to six times, but stop satisfying $\opcheck$
on a seventh $\inc$. This difference, at which any intermediary counter values
become useless, will be used throughout this section. We let $k=h-l+1$, with
$k=\infty$ when $h=\infty$.

\begin{definition}\label{def:spare-counting-set}
    A \emph{sparse counting-set} is a tuple~$s = (o, \ell, l, h)$ as in
    Definition~\ref{def:counting-set}, except it is produced from the initial
    counting-set using only the operations $\inc$, $\sparseaddone$ and
    $\sparsemerge$, defined next. Consequently, the CA matching algorithm with
    sparse counting-sets is exactly Algorithm~\ref{alg:counting-set}, where the
    standard operations $\addone$ and $\merge$ are replaced by their sparse
    counterparts $\sparseaddone$ and $\sparsemerge$, respectively.
    \begin{itemize}
        \item $\sparseaddone(s) := (o, \ell'', l, h)$, where $\ell''$ is
        constructed as follows. Let $\addone(s) = (o, \ell', l, h)$ with $\ell'
        = \llist{\ell'_1, \ell'_2 \ldots, \ell'_\head}$, noting that $\ell'_1 =
        o-1$. Then, the ordered list~$\ell''$ is obtained by removing~$\ell'_2$
        from $\ell'$ if and only if $\ell'_1 - \ell'_3 \le k$.

        \item $\sparsemerge(s_1, s_2) := (o_1, \ell'', l, h)$, where
        $\merge(s_1, s_2) = (o_1, \ell', l, h)$ and $\ell''$ is obtained from
        $\ell'$ by iteratively checking triples
        $\ell'_{i-1},\ell'_{i},\ell'_{i+1}$, in order of increasing $i$, and
        removing $\ell'_{i}$ if $\ell'_{i-1}-\ell'_{i+1}\le k$. Assuming $o_2\le
        o_1$, we may stop checking triples once $\ell'_{i-1}<(o_1-o_2)$, or once we
        have inspected all triples in $\ell'$.
    \end{itemize}
\end{definition}

As the definition implies, we will consider Algorithm~\ref{alg:counting-set} from here with the operations replaced by their sparse variants. The following lemmas are derived from the definitions of these operations.

\begin{lemma}\label{lemma:ineq-ell}
    A sparse counting-set~$s = (o, \ell, l, h)$ with $|\ell| \ge 3$, is such
    that the inequality~$\ell_i - \ell_{i+2} > k$ holds for any index~$i$.
\end{lemma}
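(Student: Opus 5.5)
The plan is to prove the statement by induction on the sequence of operations that produces $s$. By Definition~\ref{def:spare-counting-set}, every sparse counting-set arises from the initial counting-set $(0,\llist{0},l,h)$ by finitely many applications of $\inc$, $\sparseaddone$ and $\sparsemerge$. The invariant to maintain is: for every index $i$ with $i+2\le|\ell|$ we have $\ell_i-\ell_{i+2}>k$. It depends only on the list $\ell$, not on the offset. The base case is vacuous, since $|\ell|=1$. I will then check that each operation preserves the invariant. Throughout I use that lists are strictly decreasing.

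For $\inc$, the list is either unchanged or loses its last element $\ell_\head$. Every consecutive triple of the new list is then a consecutive triple of the old list, so the invariant is inherited.

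For $\sparseaddone$, first consider the case $\ell_1=o-1$. Then $\addone$ leaves the list unchanged. I will argue that no removal happens here either: this case should be read as $\ell'=\ell$, where the triple condition already holds by hypothesis. (If the definition is read as still testing $\ell'_1-\ell'_3$, the removal only merges neighbours, which is handled as below.) In the main case, $\ell'=\llist{o-1,\ell_1,\ell_2,\ldots}$ with $o-1>\ell_1$. All triples except the first, $(o-1,\ell_1,\ell_2)$, are old triples, so only two situations need checking.
\begin{itemize}
\item If $o-1-\ell_2>k$, nothing is removed and the invariant holds.
\item Otherwise $\ell_1$ is removed, giving $\llist{o-1,\ell_2,\ell_3,\ldots}$. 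The only new triple is $(o-1,\ell_2,\ell_3)$, and
\[
o-1-\ell_3>\ell_1-\ell_3>k
\]
by the induction hypothesis. If $\ell_3$ does not exist, there is nothing to check.
\end{itemize}

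For $\sparsemerge$, which I expect to be the main obstacle, I will argue about the removal pass via a loop invariant. After the pass has handled index $i$, every consecutive triple of the current list whose first element lies at position at most $i-1$ satisfies the strict inequality. This holds because a triple either passed the test or had its middle removed. After a removal, the pass re-examines the new triple formed with the same left element $\ell'_{i-1}$ (the indices shift). So the pass should be read as not advancing $i$ after a deletion, and I would make that reading explicit.

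Deleting an element never invalidates triples already verified further left. Any verified triple $(\ell'_{j-1},\ell'_j,\ell'_{j+1})$ with $j+1<i$ is untouched. The triple whose right end is the deleted element's position is replaced by one whose right end is smaller, which only enlarges the difference.

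It remains to justify the early stop. The shifted elements $(o_1-o_2)+\ell^2_j$ are all at least $o_1-o_2$. Hence once $\ell'_{i-1}<o_1-o_2$, the remaining suffix from $\ell'_{i-1}$ onward consists of consecutive elements of $\ell^1$, and no removal has touched it. Those triples satisfy the invariant by the induction hypothesis for $s_1$. The triples straddling the boundary have first element at least $o_1-o_2$, so the pass examined them before stopping. Combining the loop invariant with this suffix argument gives the claim for $\sparsemerge$, completing the induction.
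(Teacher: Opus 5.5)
Your proposal is correct and follows the same route as the paper: induction over the generating operations. In it, $\inc$ only truncates the list, and $\sparseaddone$ either prepends a valid element or effectively replaces $\ell_1$ by the larger $o-1$. The $\sparsemerge$ pass removes violating middle elements, with the early stop justified because every shifted element of the second list is at least $o_1-o_2$.

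Your version is more careful than the paper's sketch. It makes explicit that the pass must re-test the triple with the same left element after a deletion, and that the untouched tail after stopping is a contiguous suffix of $\ell^1$ covered by the induction hypothesis. One small correction to your hedge: in the case $\ell_1=o-1$ of $\sparseaddone$, no removal can occur at all, since $\ell_1-\ell_3>k$ already holds by hypothesis.
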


\begin{proof}
    We show this by induction. A sparse counting-set must be obtained using the
    operations $\inc$, $\sparseaddone$ and $\sparsemerge$. Only $\sparseaddone$
    and $\sparsemerge$ lengthen a list $\ell$. The lemma holds for the initial
    counting-set. The operation $\sparseaddone$ either prepends an element
    (precisely when doing so does not violate this inequality) or replaces the
    leading element. As the offset is monotonically increasing under all
    operators, this new leading element must be greater than or equal to the
    element it replaces, preserving this inequality. The operation
    $\sparsemerge$ directly removes the elements that violate this inequality by
    scanning triples in the merged list, left to right. Note that after adding
    $(o_1-o_2)$ to each element in the second list before merging, all elements
    of this list will be between $(o_1-1)$ and $(o_1-o_2)$, inclusive, so we can
    stop checking triples once $\ell'_{i-1}<(o_1-o_2)$.
\end{proof}

Before stating the size bound, we record an invariant that simplifies the analysis. Although Definition~\ref{def:spare-counting-set} allows arbitrary sequences of $\inc$, $\sparseaddone$, and $\sparsemerge$, the sparse counting-sets that actually arise during a run of Algorithm~\ref{alg:counting-set} always satisfy $o - \ell_i \geq 1$ for every index $i$ (i.e., the value $0$ occurs only in the bare initial counting-set $(0,\langle 0\rangle,l,h)$, which has
$|\ell| = 1$).

The reason is as follows. $\sparsemerge$ is invoked only at line~25, in the branch $c(q) \neq \bot \wedge \theta' \in \{\nop,\inc\}$; by Property~\ref{property:transition} this forces $p, q \in Q_i$ for some counter $c_i$, and a routine induction on the outer iteration shows that any counting-set stored at a state in some $Q_i$ has an offset of at least $1$. Propagating this through the operations: $\mathsf{inc}$ strictly increments every $v_i := o - \ell_i$, $\sparseaddone$ only ever prepends an element with $v = 1$, and $\sparsemerge$ preserves the $v$-values of its operands. Since neither operand of a $\sparsemerge$ in Algorithm~\ref{def:spare-counting-set} is the bare initial counting-set, the invariant $v_i \geq 1$ therefore propagates throughout. We will use this fact freely in Lemma~\ref{lemma:sparseness} and Proposition~\ref{prop:merge-cost}.

\begin{lemma}\label{lemma:sparseness}
For a sparse counting-set $s = (o,\ell,l,h)$ with $h \geq 1$ arising during a run of Algorithm~1, the size $|\ell|$ is bounded by
\[
H'(l, h) \;:=\;
\begin{cases}
\displaystyle 2\left\lceil \dfrac{h}{k+1} \right\rceil & \text{if } h \neq \infty,\\
2 & \text{if } h = \infty.
\end{cases}
\]
\end{lemma}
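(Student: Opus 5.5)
We argue through the values $v_i := o - \ell_i$ represented by $s$. Because $\ell$ is strictly decreasing, the sequence $v_1 < v_2 < \cdots < v_n$ is strictly increasing, with $n = |\ell|$. The invariant recorded just before the statement gives $v_1 \ge 1$. Definition~\ref{def:counting-set} requires $v_n \le h$, so all values lie in $[1,h]$. Lemma~\ref{lemma:ineq-ell} gives $\ell_i - \ell_{i+2} > k$, which in terms of values is $v_{i+2} - v_i > k$, i.e.\ $v_{i+2} - v_i \ge k+1$, for every $i$ with $i+2 \le n$. The plan is to turn this into a counting bound on $n$.

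\textbf{Case $h = \infty$.} Here $k = \infty$, so the condition $v_{i+2} - v_i > k$ can never hold. Lemma~\ref{lemma:ineq-ell} therefore rules out $|\ell| \ge 3$, and we get $|\ell| \le 2$ directly. I should double-check that this fits the definition of $\sparsemerge$ and $\sparseaddone$: with $k = \infty$, the test $\ell'_{i-1} - \ell'_{i+1} \le k$ always succeeds, so any middle element of any triple is removed. This agrees with the lemma.

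\textbf{Case $h \neq \infty$.} Split the values into odd-indexed ones $v_1, v_3, v_5, \ldots$ and even-indexed ones $v_2, v_4, \ldots$. Within each class, consecutive elements differ by at least $k+1$. So each class is a set of integers in $[1,h]$ with pairwise gaps of at least $k+1$. Such a set has at most $\lceil h/(k+1) \rceil$ elements: partition $[1,h]$ into the $\lceil h/(k+1)\rceil$ consecutive blocks of length $k+1$ (the last possibly shorter), and note that each block contains at most one element of the class. Summing over the two classes gives $n \le 2\lceil h/(k+1)\rceil = H'(l,h)$.

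The main obstacle is making sure the hypotheses actually hold for every set produced in a run, not just those covered by the literal statement of Lemma~\ref{lemma:ineq-ell}. Two points need checking.

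\begin{itemize}
\item Lemma~\ref{lemma:ineq-ell} is stated for $|\ell| \ge 3$. Lists of length at most $2$ are trivially within the bound, since $H'(l,h) \ge 2$ whenever $h \ge 1$.
\item The lower bound $v_1 \ge 1$ comes from the invariant. The bare initial counting-set, which contains the value $0$, has $|\ell| = 1$ and is therefore also trivially covered.
\end{itemize}

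With these checked, the block-partition count above completes the proof.
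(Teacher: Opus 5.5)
Your proof is correct and follows the paper's argument. You pass to the values $v_i = o - \ell_i \in [1,h]$ via the invariant (handling the bare initial set and lists of length at most $2$ trivially), obtain $v_{i+2}-v_i \ge k+1$ from Lemma~\ref{lemma:ineq-ell}, split into odd and even indices, and read off the $h=\infty$ case directly from the same lemma; the only cosmetic difference is that you bound each parity class by partitioning $[1,h]$ into blocks of length $k+1$, where the paper telescopes to $1+(j-1)(k+1) \le h$, and the two counts are equivalent.
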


\begin{proof}
For $h = \infty$ we have $k = \infty$, and $|\ell| \leq 2$ is forced by Lemma~\ref{lemma:ineq-ell}.

For $h \neq \infty$, let $\ell = \ell_1, \dots, \ell_n$ and set $v_i := o - \ell_i$, so $v_1 < v_2 < \cdots < v_n$. The case $n = 1$ is trivial since $2\lceil h/(k+1) \rceil \geq 2$. So assume $n \geq 2$; then $s$ is not the bare initial counting-set (which has $|\ell| = 1$), and by the invariant noted above, we have $v_i \geq 1$ for all $i$. Hence $v_i \in [1, h]$, and by Lemma~\ref{lemma:ineq-ell} we have $v_{i+2} - v_i \geq k + 1$.

Consider the odd-indexed subsequence $v_1 < v_3 < \cdots < v_{2j-1}$, where $j = \lceil n/2 \rceil$. Consecutive elements differ by at least $k + 1$, so
$v_{2j-1} \geq v_1 + (j-1)(k+1) \geq 1 + (j-1)(k+1)$. Combined with $v_{2j-1} \leq h$:
\[
(j-1)(k+1) \;\leq\; h - 1
\;\Longrightarrow\;
j \;\leq\; \left\lfloor \frac{h-1}{k+1} \right\rfloor + 1
       \;=\; \left\lceil \frac{h}{k+1} \right\rceil.
\]
The same argument applied to the even-indexed subsequence (whose smallest element is at least $2$, giving $(i-1)(k+1) \leq h - 2$) yields $\lfloor n/2 \rfloor \leq \lceil h/(k+1) \rceil$. Summing gives $n \leq 2\lceil h/(k+1) \rceil$. 
\end{proof}

From the early stopping criteria in $\sparsemerge$, and using that, we merge
sparse lists (i.e.\ we can apply the previous lemma to these two lists), we
obtain the following proposition, which guarantees that the time complexity of
$\sparsemerge$ improves on that of $\merge$ (assuming a large enough value of
$k$), while preserving amortized constant time in the non-replicating case.

\begin{proposition}\label{prop:merge-cost}
For sparse counting-sets $s_1, s_2$ arising during a run of Algorithm~\ref{alg:counting-set} with $o_2 \leq o_1$, the computation of $\mathsf{sparseMerge}(s_1, s_2)$ considers at most $2\lceil o_2/(k+1) \rceil$ elements from each of the two lists to achieve the merge, and then investigates at most $2 \cdot 2\lceil o_2/(k+1) \rceil$ triples from the merged list to restore sparsity.
\end{proposition}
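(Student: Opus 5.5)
We work in the coordinates $v = o - \ell_i$ and use the early-stopping rule in the definition of $\sparsemerge$. After shifting, the second list becomes $\ell^2_i + (o_1 - o_2)$. Its elements are exactly the values $o_1 - v$ with $v \in N_{s_2} \subseteq [1, o_2]$, so they lie in $[o_1 - o_2, o_1 - 1]$.

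\textbf{Bounding the merge scan.} Since $\ell$ is strictly decreasing, the merge only has to walk the first list $\ell^1$ while its entries are at least $o_1 - o_2$. Below that threshold no interleaving with the shifted $\ell^2$ can occur, and the remaining tail of $\ell^1$ is kept unchanged by pointer splicing. So the visited elements of $\ell^1$ are those with $v = o_1 - \ell^1_i \in [1, o_2]$, plus one sentinel comparison, which we absorb or charge to the bound. All of $\ell^2$ is visited, and its values also lie in $[1, o_2]$.

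Both lists are sparse, so each satisfies Lemma~\ref{lemma:ineq-ell}: $v_{i+2} - v_i \ge k+1$. We also have $v_i \ge 1$ by the invariant stated before Lemma~\ref{lemma:sparseness}. Now repeat the proof of Lemma~\ref{lemma:sparseness} with $h$ replaced by $o_2$. A strictly increasing sequence in $[1, o_2]$ whose odd-indexed and even-indexed subsequences each have gaps at least $k+1$ has at most $2\lceil o_2/(k+1)\rceil$ elements. The only point to check is that the argument of Lemma~\ref{lemma:sparseness} used nothing about $h$ beyond being an upper bound on the values, so it transfers verbatim. This gives the first claim for each of the two lists.

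\textbf{Bounding the triple checks.} The triple-checking phase stops once $\ell'_{i-1} < o_1 - o_2$. Hence every inspected triple has its first element in the merged prefix of values $v \in [1, o_2]$, and there is at most one triple per such index. A crude bound on this prefix is the sum of the two per-list bounds, $2 \cdot 2\lceil o_2/(k+1)\rceil$, so at most that many triples are investigated. Deletions only shrink the list and never create new indices to scan, so they do not increase the count.

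\textbf{Main obstacle.} The delicate part is making the stopping rule precise. We need to argue that elements of $\ell^1$ below $o_1 - o_2$ can never be removed by the triple rule. Any triple $\ell'_{i-1}, \ell'_i, \ell'_{i+1}$ whose first element is below $o_1 - o_2$ consists entirely of original $\ell^1$ elements, and these already satisfy Lemma~\ref{lemma:ineq-ell}. Triples straddling the boundary are inspected, since their first element is at least $o_1 - o_2$. So correctness of the early stop is consistent with the counting, and the stated bounds follow.
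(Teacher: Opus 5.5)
Your proposal is essentially the paper's own argument: the early-stopping rule confines all work to entries whose value $o_1-\ell'_i$ lies in $[1,o_2]$, the counting from Lemma~\ref{lemma:sparseness} with $h$ replaced by $o_2$ bounds each list, and summing the two per-list bounds gives the triple bound. The only caveat is additive bookkeeping, which the paper's one-line justification also glosses over: the merge must still compare the first entry of $\ell^1$ below $o_1-o_2$, restoring sparsity forces re-testing the same left element after a deletion, and a straddling triple can delete an $\ell^1$ entry just below $o_1-o_2$ (contrary to your remark that such entries are never removed), so ``absorb'' and ``one triple per index'' are not literally right and the stated counts can in fact be exceeded by one.
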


\begin{figure}
    \centering
    \includegraphics[width=0.8\linewidth,page=2]{figures/counting-set.pdf}
    \caption{
        A \emph{sparse counting-set} with $l = 85$ and $h = 100$, where $k =
        16$. We assume the sparse counting-set is constructed by using sparse
        operations instead of the corresponding non-sparse operations as in
        Figure~\ref{fig:counting-set}.
    }
    \label{fig:sparse-counting-set}
\end{figure}

\begin{lemma}\label{lemma:sparse-counting-set}
    Let $N_s, N_{s'} \subseteq \N_0$ be sets of counter values represented by a
    counting-set~$s$ and corresponding (i.e.\ produced with the corresponding
    sequence of sparse operations) sparse counting-set~$s'$, respectively. Then,
    $N_{s'} \subseteq N_s$ and, for any~$n \in N_s \setminus N_{s'}$ with $n \in
    [l,h]$, there exists $n' \in N_{s'}$ such that $n' \in [l, h]$. That is,
    $\opcheck(s) = \opcheck(s')$.
\end{lemma}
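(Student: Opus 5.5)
Our plan is to prove a stronger invariant by induction on the sequence of operations, and then read off both claims of Lemma~\ref{lemma:sparse-counting-set}. The invariant is stated in terms of the represented sets $N_s$ and $N_{s'}$, where $s$ and $s'$ are produced by the same operation sequence (with $\addone,\merge$ replaced by $\sparseaddone,\sparsemerge$ in the case of $s'$):
\begin{enumerate*}[label=(\roman*)]
\item $N_{s'} \subseteq N_s$; and
\item for every $n \in N_s \setminus N_{s'}$ there exist $a,b \in N_{s'}$ with $a < n < b$ and $b - a \le k$.
\end{enumerate*}
Condition~(ii) says that every dropped value is ``covered'' by a window of width at most $k$ whose endpoints are still kept.

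The lemma follows from (ii) as follows. Suppose $n \in N_s \setminus N_{s'}$ and $n \in [l,h]$. Then $a < n \le h$ and $b > n \ge l$. If $a \ge l$, then $a \in [l,h]$. Otherwise $a < l$, so $b \le a + k < l + k = h+1$, hence $b \in [l,h]$. For $h=\infty$ we have $k=\infty$, and $b$ itself works because $b > n \ge l$. Consequently $\opcheck(s) = \opcheck(s')$. Here we note that $\opcheck$ tests $N\cap[l,h]\neq\emptyset$ through its largest element $o-\ell_\head$; since $N\subseteq[0,h]$, this is equivalent to $N\cap[l,h]\neq\emptyset$.

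The induction runs over the three operations, with the initial set as a trivial base case. The three cases are handled as follows.

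\textbf{$\inc$.} Both $N_s$ and $N_{s'}$ are shifted by $1$, and values exceeding $h$ are truncated. Part (i) is preserved. For (ii), shifting preserves the covering pair $a<n<b$ together with the width $b-a$. The only danger is that the upper witness $b+1$ exceeds $h$ and is truncated. In that case $n+1 \le h < b+1$, and we must find a new upper witness. We expect this to be the main obstacle. Our first approach is to argue that the truncation then also makes $n+1$ irrelevant: we would strengthen (ii) to allow $b$ to be ``$>h$'', i.e.\ we phrase (ii) as \emph{either} a covering pair exists \emph{or} $a$ exists with $n - a < k$. In the proof of the lemma, the second alternative suffices whenever $n\in[l,h]$: indeed $a > n-k \ge l-k$, and if $a<l$ then we are fine only when combined with $n\le h$ \dots. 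If this fails, we would instead track that truncation removes only the head, and use the fact that $\ell$ is a list in decreasing order, so that the largest remaining kept value below $n$ serves as the witness. Ultimately we would refine the invariant to the monotone statement ``for every $m\ge 0$, $\opcheck(\inc^m(s)) = \opcheck(\inc^m(s'))$'', which is the real content suggested by the $\{4,5,\dots,10\}$ versus $\{4,10\}$ example. We would verify this statement per operation, using (ii) with $b$ allowed to lie outside $[0,h]$ as a phantom witness.

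\textbf{$\sparseaddone$.} The value $1$ is added, and the element $\ell'_2$ (value $v_2$) is removed only if $v_3 - 1 \le k$. In that case $1 < v_2 < v_3$ supplies a covering pair. Earlier dropped values remain covered, because their witnesses are either unchanged or were the removed element. In the latter case we replace the witness by the neighbouring kept element, and the width shrinks or stays within $k$, because the neighbours of $v_2$ span at most $k$.

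\textbf{$\sparsemerge$.} The union preserves (i) and the witnesses from each operand. Each removal of $\ell'_i$ when $\ell'_{i-1}-\ell'_{i+1}\le k$ produces a covering pair from its neighbours. For the chained removals we use the same neighbour-substitution argument as for $\sparseaddone$: if an earlier witness is deleted, the new witness is the next kept element, whose distance is bounded by the triple condition.
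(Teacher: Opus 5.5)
The gap is the $\inc$ case, which you flag as the main obstacle but never actually prove. Your derivation of $\opcheck(s)=\opcheck(s')$ from (ii) is correct, and the union and removal cases are essentially fine. But invariant (ii) as stated is not preserved by $\inc$. Take $h=3$ and $l\le 2$, so $k\ge 2$:
\begin{itemize}
\item The sequence $\inc,\inc,\addone,\inc,\addone$ from the initial counting-set gives $N_s=\{1,2,3\}$.
\item On the sparse side, $\sparseaddone$ drops $2$ (since $3-1\le k$), so $N_{s'}=\{1,3\}$.
\item One more $\inc$ truncates $4$ from both, leaving $N_s=\{2,3\}$ and $N_{s'}=\{2\}$. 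Now $3\in N_s\setminus N_{s'}$ has no kept value above it.
\end{itemize}
Your first patch, a lower witness with $n-a<k$, is too weak: with $l=5$, $h=10$, $k=6$, the values $a=1$, $n=6$ satisfy it, yet no kept value need lie in $[l,h]$. That is where your argument trails off. The other two suggestions (``the largest kept value below $n$ serves as the witness'', and $\opcheck(\inc^m(s))=\opcheck(\inc^m(s'))$ for all $m\ge 0$) are only announced. Neither is verified for $\inc$, and your $\sparseaddone$/$\sparsemerge$ paragraphs establish (ii), not the refined statement.

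The missing idea is to turn the truncated upper witness into a bound on the lower one: if the phantom $b$ exceeds $h$, then $a\ge b-k\ge h+1-k=l$. So let $a$ and $b$ be the \emph{nearest} elements of $N_{s'}$ below and above $n$, and use the invariant ``$a$ exists, and either $b$ exists with $b-a\le k$, or $a\ge l$''.
\begin{itemize}
\item It gives the lemma at once: in the second case $a<n\le h$ puts $a$ in $[l,h]$.
\item $\inc$ preserves it: truncating $b$ lands you in the second case, and if $a+1>h$ then $n+1$ is truncated too.
\item Unions preserve it.
\item Sparse removals never delete the first or last element, so deleting $a$ or $b$ passes the role to the deleted element's other neighbour. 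The triple condition then gives a pair of width at most $k$.
\end{itemize}
Equivalently, you can verify the window form of your monotone statement: $N_s$ and $N_{s'}$ meet exactly the same intervals $[l-m,h-m]$ for $m\ge 0$. Here $\inc$ just shifts $m$, and unions are harmless. Deleting $v$ with kept neighbours $a<v<b$, $b-a\le k$, cannot empty such an interval, because $k$ consecutive integers containing $v$ cannot fit inside $(a,b)$.

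For comparison, the paper argues by contradiction on the removed value $v$ and its neighbours $v',v''$ at removal time, implicitly treating $v''$ as exactly such a phantom above $h$. Once repaired, your forward-invariant route would make explicit the truncation and chained-removal issues that the paper's sketch passes over.
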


\begin{proof}[Sketch]
    The tuples $s$ and $s'$ clearly agree in offset, $l$ and $h$. Let $\ell$ and
    $\ell'$ be the ordered lists in $s$ and $s'$ respectively. By construction,
    $\ell'$ is a subsequence of $\ell$. We proceed by contradiction. Assume that
    $\opcheck(s)\ne \opcheck(s')$. Then we must have $\opcheck(s)$ true and
    $\opcheck(s')$ false, as $s'$ represents a subset of $s$. Then there is some
    value $v$ in $\ell$ that satisfies the check (i.e.\ $l\le o-v\le h$), which
    does not occur in $\ell'$. The value $v$ must have been removed either by
    $\sparseaddone$ or by $\sparsemerge$, but each removes a value only if it
    would otherwise occur in a context $v',v,v''$ where $v'-v''<k+1$. However,
    we must then have $l\le o-v \le h$, $o-v'<l$ and $o-v''>h$, but then the
    difference between $v'$ and $v''$ would have to be at least $h-l+2$, which
    causes a contradiction.
\end{proof}

\begin{corollary}
    Let $f$ be the c-config~$f$ of a flat CA representing the initial
    super-config. Given a string $w \in A^*$, we compute a c-config~$g$ such
    that $f \To_w g$ using counting-set operations and replication of
    counting-sets. In parallel, we define another c-config~$g'$ obtained by
    applying the same procedure, but replacing each counting-set operation with
    its sparse counterpart. Then, $g$ represents a final super-config if and
    only if $g'$ represents a final super-config.
\end{corollary}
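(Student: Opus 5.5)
We argue by induction on the length of $w$, running the dense computation (producing $g$) and the sparse computation (producing $g'$) in lockstep on the same prefix. The invariant we maintain after each character is
\begin{quote}
$g$ and $g'$ have the same set of keys, and for every key $q$ the counting-set $g'(q)$ is the sparse counterpart of $g(q)$.
\end{quote}
Here ``sparse counterpart'' means: same offset and bounds, $N_{g'(q)}\subseteq N_{g(q)}$, and, more strongly than Lemma~\ref{lemma:sparse-counting-set}, for every $m\ge 0$ the counting-set obtained from $g(q)$ by $m$ applications of $\inc$ passes $\opcheck$ if and only if the one obtained from $g'(q)$ does. Acceptance is decided by $\opcheck$ on final states. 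Hence once the invariant holds at $|w|$, the statement follows by taking $m=0$, which is exactly Lemma~\ref{lemma:sparse-counting-set}.

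For the inductive step, we go through the cases of Algorithm~\ref{alg:counting-set} and observe that the control flow depends on the counting-sets only through $\opcheck(f(p))$ (line~\ref{line:condition-check}) and through emptiness of lists (the final filter). By the invariant, $\opcheck$ agrees on the two sides. For emptiness, $\ell'$ is nonempty iff $\ell$ is: sparse removal only deletes middle elements of triples, and $\inc$ drops the head on both sides simultaneously, since the head (the largest value) is never removed by sparsification. So both runs perform the same transitions and the same operations, and we only need that each operation preserves the ``counterpart'' relation:
\begin{itemize}
\item $\mathsf{replicate}$ trivially preserves it.
\item $\inc$ preserves it because the $m$-fold condition is shift-invariant.
\item Fresh initial sets coincide on both sides.
\item $\sparseaddone$ and $\sparsemerge$ require an argument, given next.
\end{itemize}

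The key observation for these two operations is a strengthened form of the argument in the proof sketch of Lemma~\ref{lemma:sparse-counting-set}. When a value $v$ is removed from a context $v',v,v''$ with $v'-v''\le k$, the corresponding counter values satisfy $o-v'<o-v<o-v''$ and $(o-v'')-(o-v')\le k=h-l$. After any number $m$ of further increments, suppose $v$'s counter value lies in $[l,h]$. Then either the counter value from $v'$ is still $\ge l$, or the counter value from $v''$ is still $\le h$; both cannot fail, since that would force a gap of at least $h-l+2$. Note that $v''$ is only discarded by $\inc$ once it exceeds $h$. So some surviving neighbour witnesses $\opcheck$ at every future time.

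Since neighbours may themselves later be removed, we state the claim as a property of the whole list: for every $m$, the set of values of $\ell$ that lie in $[l,h]$ after $m$ increments is nonempty iff the same holds for $\ell'$. We prove it by an inner induction over the sequence of removals performed inside $\sparseaddone$ and $\sparsemerge$, applying the neighbour argument to the list at the moment of each removal. We also need that the merge and union commute with this relation, i.e.\ $\opcheck$-equivalence under all future shifts is preserved by taking unions of represented sets. This holds because the $m$-fold predicate on a union is the disjunction of the predicates on its parts.

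The main obstacle is handling cascading removals and the interplay with $\merge$. The sparse list is not the sparsification of the dense list, but a sparsification of unions of already-sparse lists. The fix I would try first is to phrase the invariant purely semantically: define $A_m(X)$ to be the statement that $X+m$ meets $[l,h]$, and require $A_m(N_{s'})=A_m(N_s)$ for all $m$. With this phrasing, unions, shifts, and single-element deletions (by the neighbour argument) each visibly preserve the invariant, and the list structure never has to be compared directly.
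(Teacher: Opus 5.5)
Your proof is correct and follows the paper's route. You run the dense and sparse computations in lockstep and reduce both control flow and acceptance to agreement of $\opcheck$, using the neighbour argument behind Lemma~\ref{lemma:sparse-counting-set}. Your strengthening to the shift-invariant predicates $A_m$ for all $m\ge 0$, with an inner induction over individual deletions, is what actually makes that argument inductive. The paper's one-line proof, and the lemma's contradiction sketch, do not address the fact that the neighbours $v',v''$ of a deleted value may themselves be deleted later.

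One sub-step is misjustified. The claim ``$\inc$ drops the head on both sides simultaneously'' fails once a shared maximum has been pushed past $h$: the dense list's new maximum can be a value that sparsification removed. For example, take dense $\{3,5,10\}$ against sparse $\{3,10\}$ with $l=4$, $h=10$. One $\inc$ gives $\{4,6\}$ against $\{4\}$, and on the fifth subsequent $\inc$ the dense list drops $11$ while the sparse list drops nothing. The conclusion you need (emptiness agreement) still holds and follows directly from your own invariant: for $N\subseteq[0,h]$, $N\neq\emptyset$ iff $A_m(N)$ holds for some $m\ge 0$. Alternatively, note that both lists always share their minimum.

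Minor: the paper's $k$ is $h-l+1$, not $h-l$; your gap argument is unaffected.
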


\begin{proof}
    Follows from Lemma~\ref{lemma:sparse-counting-set}, with each sparse
    counting-set in the configuration acting indistinguishably from its
    non-sparse counterpart.
\end{proof}

Combined with Lemma~\ref{lemma:sparseness}, we derive the following definition
of $d$-sparse. This property enables efficient matching with c-configs that use
sparse counting-sets instead of regular counting-sets, which we refer to as
\emph{sparse counter configurations}~(sc-configs).

\begin{definition}
    A flat c-regex $r$ is \emph{$d$-sparse} if $H'(l, h) \le d$ (with $H'(l,
    h)$ as defined in Lemma~\ref{lemma:sparseness}) for every subexpression of
    the form $s^{\{l, h\}}$ in $r$.
\end{definition}

\begin{example}\label{example:d-sparse}
    If all subexpressions of the form~$s^{\{l, h\}}$ of a c-regex~$r$ satisfy $l
    = 0$ or $h = \infty$, then the c-regex~$r$ is $2$-sparse.
\end{example}

With all this in hand, we can now state a bound on the membership problem using
sparse counting-sets, stated in terms of $d$. In many cases, $d$ will be quite
small, and in addition, in practice, this bound is pessimistic, as it accounts
for the possibility of replicating sets linearly many times.

\begin{theorem}\label{thm:sparse-matching-bound}
    Let $r$ be a $d$-sparse c-regex. For a given string~$w \in A^*$, we can
    decide whether $w \in L(r)$ in $O(d \cdot (|r|_{\Sigma})^2 |w|)$-time. If
    $r$ is non-replicating, then the algorithm runs in $O((|r|_{\Sigma})^2
    |w|))$-time.
\end{theorem}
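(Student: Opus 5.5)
We run Algorithm~\ref{alg:counting-set} on the position CA of $r$ with $\addone$ and $\merge$ replaced by $\sparseaddone$ and $\sparsemerge$. Correctness comes first and is quick. The natural invariant of Algorithm~\ref{alg:counting-set} is that $f$ represents the super-config reached after each prefix. By Lemma~\ref{lemma:sparse-counting-set} and the Corollary, the sparse run reaches a final c-config exactly when the dense run does. Hence the sparse algorithm decides $w \in L(r)$. As in Proposition~\ref{proposition:replicating}, unbounded ranges $s^{\{l,\infty\}}$ are handled by the modified $\inc$ or the split transition, so no rewriting to $s^{\{l,l\}}s^*$ is needed. The case $h=\infty$ gives $H'=2$, consistent with $d$-sparseness.

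For the time bound, we charge work per input character. The outer loop over $p$ and its outgoing transitions touches at most $(|r|_\Sigma)^2$ transitions. Each guard test $\opcheck$, each $\inc$ and each $\sparseaddone$ costs $O(1)$; for $\sparseaddone$ this is a constant-size inspection of $\ell'_1,\ell'_2,\ell'_3$. The remaining costs are $\mathsf{replicate}$ and $\sparsemerge$.

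\begin{itemize}
\item \emph{Replication.} Copying a sparse counting-set costs $O(|\ell|)$. By Lemma~\ref{lemma:sparseness}, every set arising at a state in $Q_i$ has $|\ell| \le H'(l_i,h_i) \le d$, using $d$-sparseness. So each replication costs $O(d)$. There is at most one per transition, giving $O(d\,(|r|_\Sigma)^2)$ per character.
\item \emph{Merging.} By Proposition~\ref{prop:merge-cost}, $\sparsemerge(s_1,s_2)$ inspects $O(\lceil o_2/(k+1)\rceil)$ elements and triples. I expect this proposition to be the main technical point, and I would phrase its bound directly via Lemma~\ref{lemma:sparseness}. Every element visited in either list has value in $[1,o_2]$, or in $[1,h]$ after the cutoff $\ell'_{i-1}<o_1-o_2$. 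The odd/even argument of Lemma~\ref{lemma:sparseness} then caps the number visited by $2\lceil h/(k+1)\rceil = H'(l,h)\le d$. Each merge therefore costs $O(d)$, and there is at most one per transition, again $O(d\,(|r|_\Sigma)^2)$ per character.
\end{itemize}

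Summing over $|w|$ characters gives $O(d\,(|r|_\Sigma)^2|w|)$.

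For the non-replicating case, no $\mathsf{replicate}$ is ever executed, since every $\textsf{replications}(p)\le 1$. Each counting-set is then consumed by at most one successor per step, and the amortization argument of Holík et al.\ behind Proposition~\ref{proposition:replicating} applies. That argument charges merge cost to earlier $\inc$ calls on $s_2$, via the bound $O(o_2)$ elements. Proposition~\ref{prop:merge-cost} shows the sparse merge inspects at most a constant times as many elements, $2\lceil o_2/(k+1)\rceil \le 2o_2$ plus constant, as the dense bound $O(o_2)$. The same potential (number of $\inc$'s applied to not-yet-consumed sets, at most $(|r|_\Sigma)^2$ new per character) therefore pays for all merges. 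This gives amortized $O(1)$ per transition and $O((|r|_\Sigma)^2|w|)$ total.

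The delicate point is this potential argument. One must ensure that the destructive consumption of $s_2$ and the discarding of non-consumed sets in the line restricting $f$ to nonempty lists keep the potential nonnegative and charged at most once. I would follow the argument of~\cite{HolikSTV23}, noting that sparsification only deletes elements and so never increases any cost.
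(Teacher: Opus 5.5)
Your proposal is correct and follows essentially the paper's argument. Per input character, $O((|r|_\Sigma)^2)$ transitions each cost $O(1)$ plus at most one replication and one $\sparsemerge$ on lists of size $O(d)$ (Lemma~\ref{lemma:sparseness}); in the non-replicating case, Proposition~\ref{prop:merge-cost} lets the amortization of Hol\'{\i}k et al.\ carry over, which you spell out in more detail than the paper does. The one point the paper's proof makes explicit that you take for granted is that counting operations add parallel transitions between the same pair of states (e.g.\ the $\inc$ and $\reset$ transitions from $\texttt{c}_3$ to $\texttt{b}_2$ in Figure~\ref{fig:position-construction}), so the transition count is $O((|r|_\Sigma)^2)$ rather than literally at most $(|r|_\Sigma)^2$, with flatness bounding these to a constant number per pair.
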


\begin{proof}
    The automaton representing $r$ still has $(|r|_{\Sigma})^2$ as the number of
    transitions: While counting operations create additional transitions (e.g.,
    the two transitions from $c_3$ to $b_2$ in
    Figure~\ref{fig:position-construction}), $r$ being flat means any pair of
    states belongs to at most two counter scopes, making only a constant number
    of counter operations possible on transitions between that pair. The factor
    $d$ bounds the size of the representation of a sparse counting-set, allowing
    copying and/or merging sets within this bound. 
\end{proof}

\begin{remark}
    For the sake of a simpler presentation, we did not discuss an optimization
    where sparse counting-sets for ranges like $r^{\{0,h\}}$, $r^{\{1,h\}}$, or
    $r^{\{l, \infty\}}$ only need to store a single value.
\end{remark}

\section{Experimental evaluation}\label{section:experimental-evaluation}

We evaluate the efficiency of our sparse counting-set approach against
baselines. Efficiency is quantified by the total computation steps required for
symbol evaluations, guard checks, and counting-set operations. This includes
(1)~the number of processed counter values during $\merge$ operations and
(2)~the number of counter values involved in the replication of counting-sets.
Performance is assessed through targeted analyses of hand-crafted patterns and
large-scale evaluations on the Polyglot and Snort3 under adversarial and random
input scenarios.

\subsection{Results on targeted regular expressions}

Manually designed c-regexes have one of the following properties:
(a)~non-replicating, (b)~2-sparse, or (c)~neither. We evaluate and compare four
distinct methods---counter-expansion followed by Thompson NFA
matching~(c-expansion), Thompson-like matching with
super-configs~(super-config), standard counter-configs~(c-config), and
counter-configs with sparse counting-sets~(sc-config). For the c-expansion
method, the position CA construction is applied to the counter-expanded regex,
yielding a standard position NFA where all guards are $\psi(c) = \top$ and
actions are $\theta(c) = \nop$.

\begin{figure}[htb]
\begin{subfloat}[\regex{a*a\{k,k\}}]{
    \includegraphics[width=0.3\textwidth]{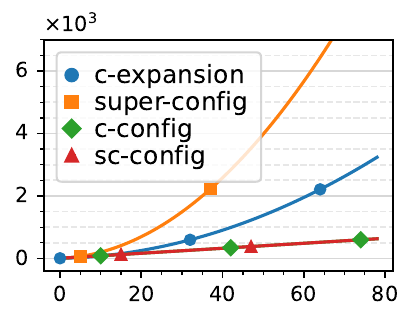}
    \label{fig:case-study-0}
}
\end{subfloat}
\hfill
\begin{subfloat}[\regex{(aa*)\{0,k\}}]{
    \includegraphics[width=0.3\textwidth]{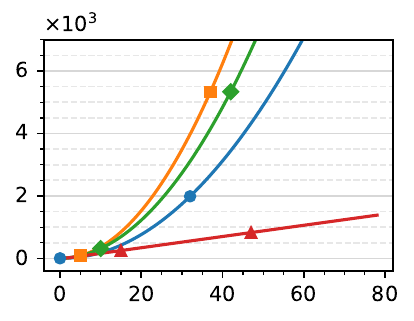}
    \label{fig:case-study-1}
}
\end{subfloat}
\hfill
\begin{subfloat}[\regex{a*(a|aab)\{k,k\}}]{
    \includegraphics[width=0.3\textwidth]{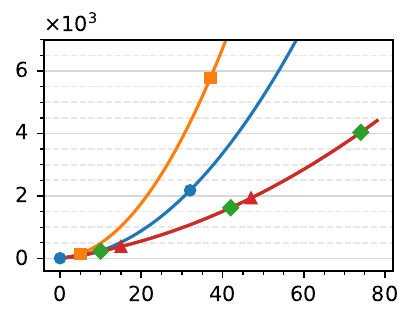}
    \label{fig:case-study-2}
}
\end{subfloat}
\caption{
    A comparison of the number of computation steps for our manually designed
    c-regexes, each matched against strings of the form~$\texttt{a}^\text{k}$.
    The x-axis indicates the length of the input string, which corresponds to
    the value~\texttt{k} in each pattern's counting operator.
}
\label{fig:case-study}
\end{figure}

The results in \Cref{fig:case-study} highlight the advantages of our sparse
counting-set approach. First, both the counter-expansion and super-config
methods fail to achieve linear-time matching. As shown in
\Cref{fig:case-study-1}, the standard c-config approach is also limited,
achieving linear-time performance only on non-replicating c-regexes. However, by
incorporating the sparse counting-set, we extend this capability to achieve
linear-time matching for replicating c-regexes that are $d$-sparse for a
bounded~$d$. The reason our sparse counting-set provides no advantage in
\Cref{fig:case-study-2} is explained by Lemma~\ref{lemma:sparseness}. For a
counting operator such as~\regex{\{k,k\}}, the upper bound~$H'(l,h)$ of the
number of values in a sparse counting-set becomes equal to an upper bound~$h$
for the non-sparse one, as $h - l + 1$ is simply $1$. As both bounds are
\texttt{k}, any potential benefit from sparsity is lost.

Furthermore, our experimental results demonstrate that for cases with a finite
upper bound~$h$ in a counter operation~$r^{\{l, h\}}$, the performance of the
super-config method is within a constant factor of the counter-expansion
approach. This result aligns with the theoretical explanation---each
config~$(q,n)$ in the super-config approach is equivalent to a state in the
position NFA generated by counter-expansion. The constant factor difference
arises from the elimination of guards and actions in the position NFA of the
expanded regex.

\subsection{Results on real-world regular expressions}

{\graphicspath{{figures/computations-comparison/}}

Next, we use c-regexes from two sources: the Polyglot regex corpus by Davis et
al.\!~\cite{DavisMCSL19} and the Talos LightSPD Snort3
ruleset~(\texttt{snapshot-31470}). Polyglot contains 511,196 patterns~(30,833
with counters) and Snort3 contains 9,755 patterns~(2,837 with explicit
counters). Table~\ref{tab:dataset-props} summarizes the structural properties of
these patterns, determined via static analysis of their position counting
automata.

We also compute the distribution of $H'(l,h)$ for 64,792 and 4,734~individual
counter operators found in flat regexes from Polyglot and Snort3, respectively,
to assess the practical bound of the parameter d on matching performance.
Figure~\ref{fig:dataset-dist-hprime} illustrates histograms of lower-bound~$l$
and upper-bound~$h$ of counter operators and the size~$H'(l,h)$ of sparse
counting-sets in our approach.

\begin{table}[htb]
\centering\small
\caption{Structural properties of patterns with counters.}
\label{tab:dataset-props}
\begin{tabular}{lrrrr}
\toprule
Property &
\multicolumn{2}{c}{Polyglot} &
\multicolumn{2}{c}{Snort3} \\
\cmidrule(r){1-1}
\cmidrule(r){2-3}
\cmidrule{4-5}
    Total with counters & 30,833 & (100\%) & 2,837 & (100\%) \\
    Flat~(depth~$d = 1$) & 29,911 & (97.0\%) & 2,749 & (96.9\%) \\
    Replicating & 319 & (1.0\%) & 79 & (2.8\%) \\
    2-sparse & 15,161 & (49.2\%) & 1,300 & (45.8\%) \\
    \bottomrule
\end{tabular}
\end{table}

\begin{figure}[htb]
\centering
\includegraphics[width=0.85\textwidth]{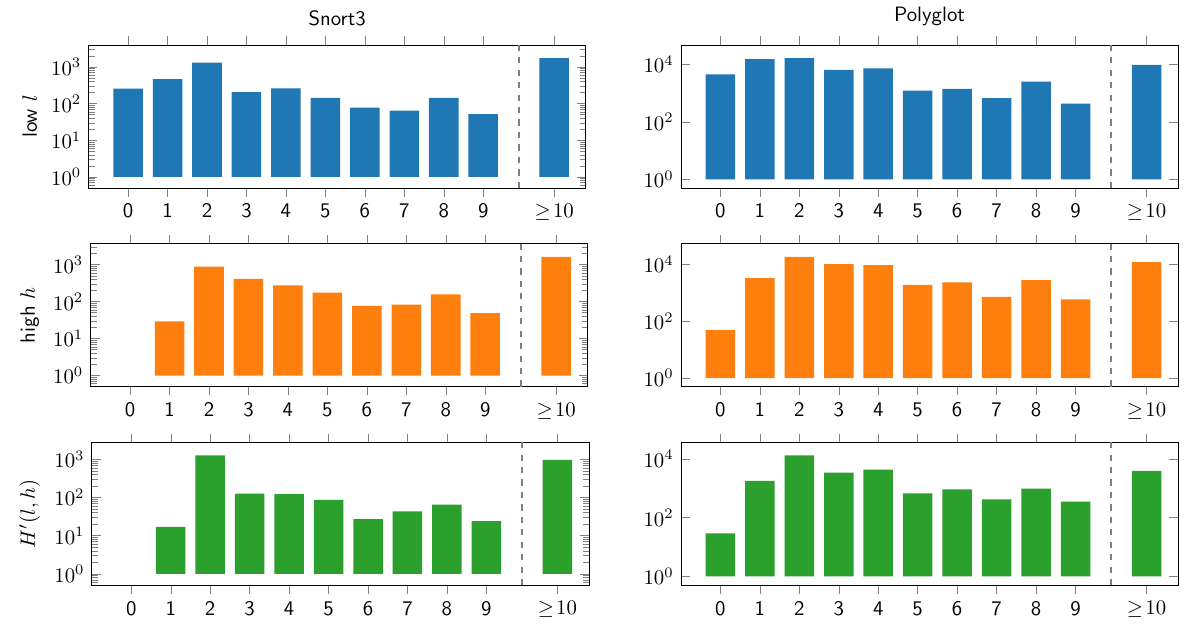}
\caption{
    Histograms of the lower bound~$l$, the upper bound~$h$, and the maximum
    sparse counting-set size~$H'(l,h)$ for each counter operator in flat
    c-regexes. All results are shown on a logarithmic scale.
}
\label{fig:dataset-dist-hprime}
\end{figure}

The bottom row of \Cref{fig:dataset-dist-hprime} is the empirical distribution
of $H'(l,h)$, whose per-pattern maximum is exactly the $d$ of
\Cref{thm:sparse-matching-bound}. In both corpora, the mass concentrates at
small values, with only a short tail at $H'(l,h) \ge 10$; this is driven by the
commonly occurring unbounded ranges, which give $H'(l,h) = 2$ by
Lemma~\ref{lemma:sparseness}. Hence, for most real-world patterns, $d$ is a
small constant, and the $O(d \cdot (|r|_\Sigma)^2 |w|)$ bound is within a small
factor of the non-replicating $O((|r|_\Sigma)^2 |w|)$.

Our experiments evaluate the robustness of our approach by conducting two
parallel sets of experiments simulating adversarial and random input scenarios.
In both cases, we simulated a partial matching scenario by wrapping each pattern
$r$ as \texttt{.*(r).*} to generate corresponding input strings. Our final
dataset is filtered to include only c-regexes that meet three criteria: they
(a)~contain at least one counting operation, (b)~are supported by our
implementation, and (c)~have a successfully generated test case.

\paragraph{Adversarial~(\Cref{fig:evilstrgen-experiments}):}

EvilStrGen~\cite{SuHLCG24} is designed to generate test cases that maximize the
number of matching steps. EvilStrGen was configured to assume a non-backtracking
engine and to generate string lengths of approximately 1,000. In summary, we use
10,108 and 611 c-regexes from the Polyglot and Snort3 datasets for the
experiments with EvilStrGen, respectively. 

\paragraph{Random~(\Cref{fig:xeger-experiments}):}

We simulate a random-input scenario by generating matched inputs with
Xeger\footnote{\url{https://pypi.org/project/xeger/}}. We use Xeger with its
default limit~$10$, which decides the maximum number of repetitions for Kleene
star, plus, and counting operators with infinite upper bounds. For the
experiments with Xeger, we use 10,141 and 620 c-regexes from the Polyglot and
Snort3 datasets, respectively.

\begin{figure}[h]
\textbf{Polyglot}

\subfloat[c-exp. vs. sup.-config]{
\includegraphics[width=0.3\textwidth]{
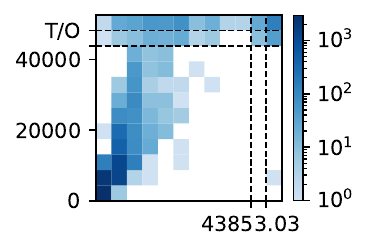}
}
\subfloat[c-exp. vs. c-config]{
\includegraphics[width=0.3\textwidth]{
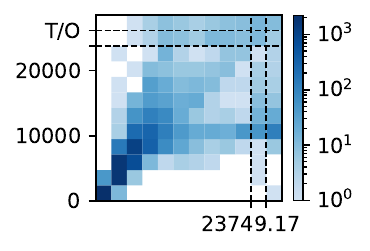}
}
\subfloat[c-exp. vs. sc-config]{
\includegraphics[width=0.3\textwidth]{
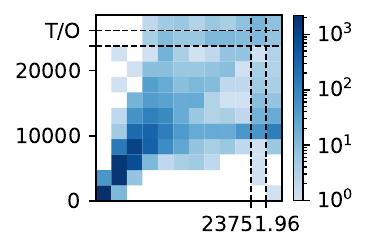}
}

\subfloat[sup.-config vs. c-config]{
\includegraphics[width=0.3\textwidth]{
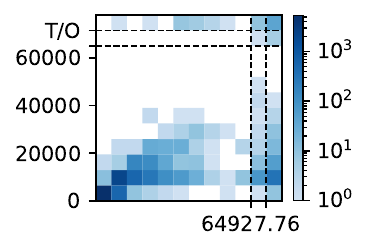}
}
\subfloat[sup.-config vs. sc-config]{
\includegraphics[width=0.3\textwidth]{
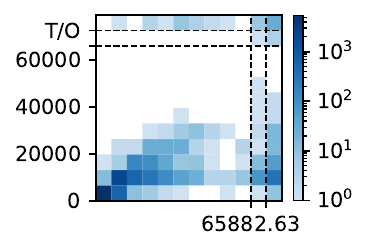}
}
\subfloat[c-config vs. sc-config]{
\includegraphics[width=0.3\textwidth]{
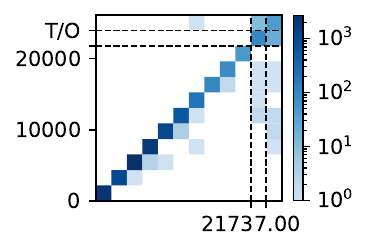}
}

\vspace{1.5em}

\textbf{Snort3}

\subfloat[c-exp. vs. sup.-config]{
\includegraphics[width=0.3\textwidth]{
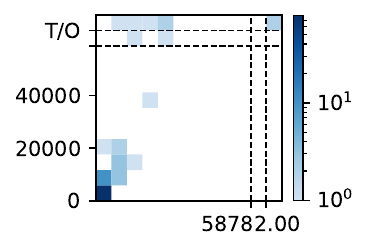}
}
\subfloat[c-exp. vs. c-config]{
\includegraphics[width=0.3\textwidth]{
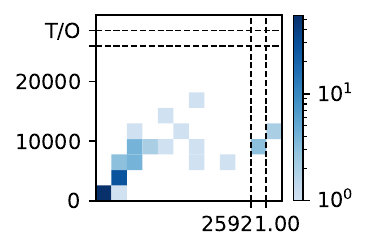}
}
\subfloat[c-exp. vs. sc-config]{
\includegraphics[width=0.3\textwidth]{
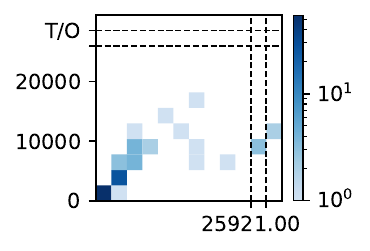}
}

\subfloat[sup.-config vs. c-config]{
\includegraphics[width=0.3\textwidth]{
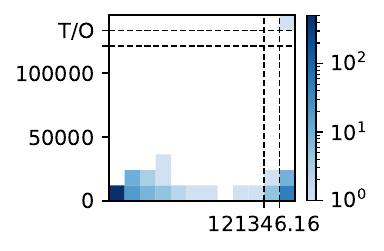}
}
\subfloat[sup.-config vs. sc-config]{
\includegraphics[width=0.3\textwidth]{
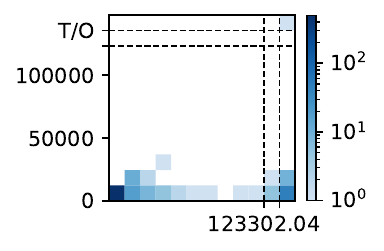}
}
\subfloat[c-config vs. sc-config]{
\includegraphics[width=0.3\textwidth]{
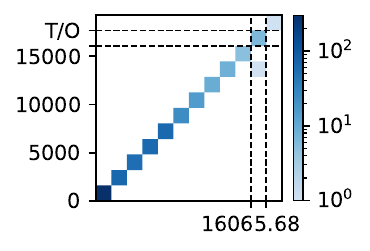}
}
\caption{\label{fig:evilstrgen-experiments}
    Comparisons of matching steps with input strings generated by EvilStrGen,
    presented as a sequence of 2D histograms. Each plot is labeled in a
    ``Method~1 vs. Method~2'' format, where the x-axis represents the step count
    for the first method and the y-axis for the second with the same scale. The
    color of each cell represents the number of such c-regex input string pairs.
    The second last line denotes the 99th percentile, and cases above (or to the
    right of) the ``T/O'' label timed out after one second.
}
\end{figure}

\begin{figure}[htb]

\textbf{Polyglot}

\subfloat[c-exp. vs. sup.-config]{
\includegraphics[width=0.3\textwidth]{
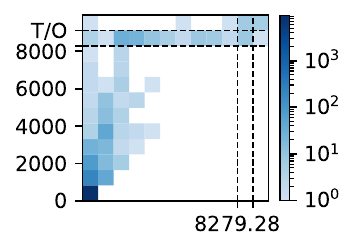}
}
\subfloat[c-exp. vs. c-config]{
\includegraphics[width=0.3\textwidth]{
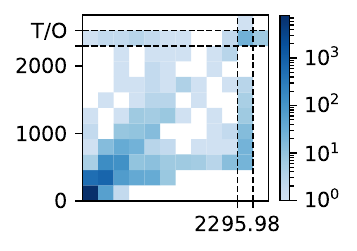}
}
\subfloat[c-exp. vs. sc-config]{
\includegraphics[width=0.3\textwidth]{
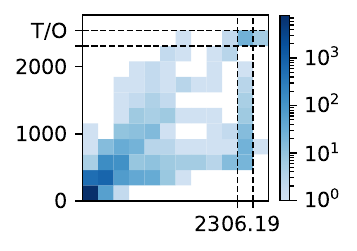}
}

\subfloat[sup.-config vs. c-config]{
\includegraphics[width=0.3\textwidth]{
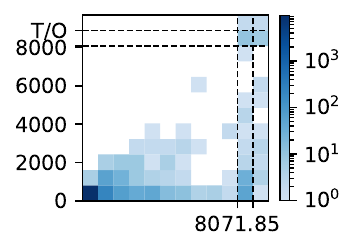}
}
\subfloat[sup.-config vs. sc-config]{
\includegraphics[width=0.3\textwidth]{
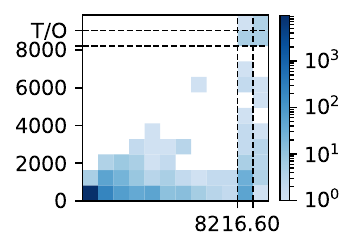}
}
\subfloat[c-config vs. sc-config]{
\includegraphics[width=0.3\textwidth]{
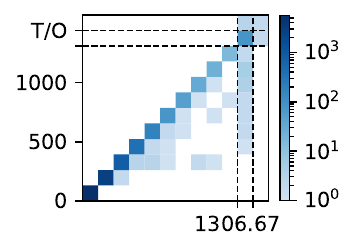}
}

\vspace{1.5em}

\textbf{Snort3}

\subfloat[c-exp. vs. sup.-config]{
\includegraphics[width=0.3\textwidth]{
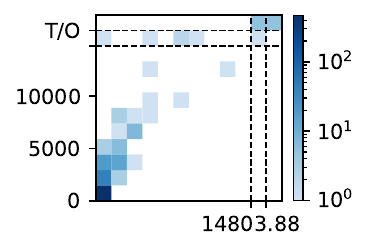}
}
\subfloat[c-exp. vs. c-config]{
\includegraphics[width=0.3\textwidth]{
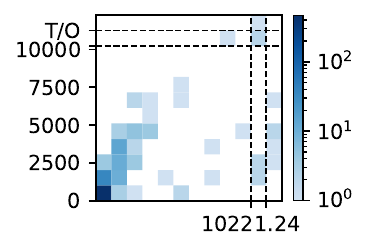}
}
\subfloat[c-exp. vs. sc-config]{
\includegraphics[width=0.3\textwidth]{
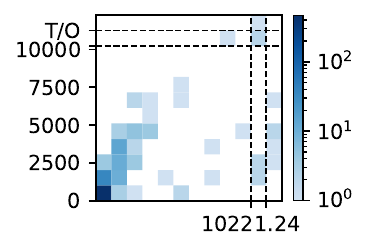}
}

\subfloat[sup.-config vs. c-config]{
\includegraphics[width=0.3\textwidth]{
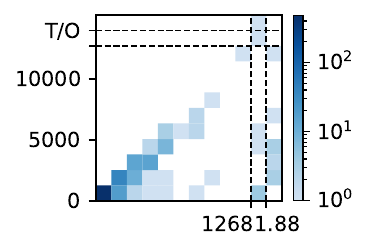}
}
\subfloat[sup.-config vs. sc-config]{
\includegraphics[width=0.3\textwidth]{
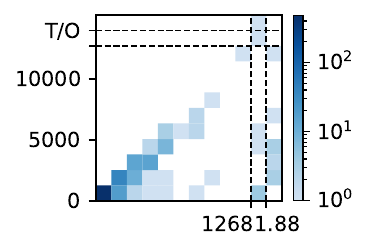}
}
\subfloat[c-config vs. sc-config]{
\includegraphics[width=0.3\textwidth]{
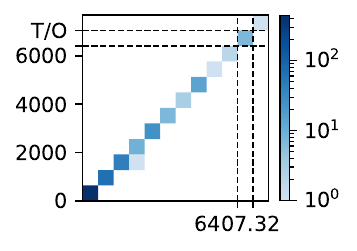}
}
\caption{\label{fig:xeger-experiments}
    Comparisons of matching steps with input strings generated by
    \texttt{xeger}, using the same experimental setup as in
    \Cref{fig:evilstrgen-experiments}. The 2D histograms show pairwise
    comparisons of the four methods, with cell color indicating the density of
    test cases.
}
\end{figure}

}

\FloatBarrier

\section{Conclusion and future work}
\label{section:conclusion}

We introduced c-configs, a formalization of on-the-fly matching with
counting-set automata~\cite{HolikSTV23}. Based on this formalization, we
presented a representation of super-configs as sparse counting-sets, minimizing
the number of counter values that are required to be tracked.

One can utilize counter expansion to convert non-flat c-regexes into flat
c-regexes. The choice of which counting operator to expand influences the
resulting expression. For instance, the c-regex~$(\regex{a\{2,3\}b)\{2,2\}}$ can
be expanded as either~$\regex{(aaa?b)\{2,2\}}$
or~$\regex{(a\{2,3\}b)(a\{2,3\}b)}$. Determining which expansion form leads to
more efficient CA-based matching is left as future research. We would also like
to explore the possibility of generalizing the (sparse) counting-set data
structure to support non-flat c-regexes.

\FloatBarrier

\subsection*{Acknowledgements}

The authors thank Rick Bower and River Martin for their detailed review of the
experimental results.

\nocite{*}
\bibliographystyle{eptcs}
\bibliography{references}

\begin{thebibliography}{10}
\providecommand{\bibitemdeclare}[2]{}
\providecommand{\surnamestart}{}
\providecommand{\surnameend}{}
\providecommand{\urlprefix}{Available at }
\providecommand{\url}[1]{\texttt{#1}}
\providecommand{\href}[2]{\texttt{#2}}
\providecommand{\urlalt}[2]{\href{#1}{#2}}
\providecommand{\doi}[1]{doi:\urlalt{https://doi.org/#1}{#1}}
\providecommand{\eprint}[1]{arXiv:\urlalt{https://arxiv.org/abs/#1}{#1}}
\providecommand{\bibinfo}[2]{#2}

\bibitemdeclare{inproceedings}{symlearn}
\bibitem{symlearn}
\bibinfo{author}{George \surnamestart Argyros\surnameend} \& \bibinfo{author}{Loris \surnamestart D'Antoni\surnameend} (\bibinfo{year}{2018}): \emph{\bibinfo{title}{The Learnability of Symbolic Automata}}.
\newblock In: {\slshape \bibinfo{booktitle}{Computer Aided Verification - 30th International Conference, {CAV} 2018, Proceedings, Part {I}}}, {\slshape \bibinfo{series}{Lecture Notes in Computer Science}} \bibinfo{volume}{10981}, \bibinfo{publisher}{Springer}, pp. \bibinfo{pages}{427--445}, \doi{10.1007/978-3-319-96145-3\_23}.

\bibitemdeclare{inproceedings}{BjorklundMT15}
\bibitem{BjorklundMT15}
\bibinfo{author}{Henrik \surnamestart Bj{\"{o}}rklund\surnameend}, \bibinfo{author}{Wim \surnamestart Martens\surnameend} \& \bibinfo{author}{Thomas \surnamestart Timm\surnameend} (\bibinfo{year}{2015}): \emph{\bibinfo{title}{Efficient Incremental Evaluation of Succinct Regular Expressions}}.
\newblock In: {\slshape \bibinfo{booktitle}{Proceedings of the 24th {ACM} International Conference on Information and Knowledge Management, {CIKM} 2015, Melbourne, VIC, Australia, October 19 - 23, 2015}}, \bibinfo{publisher}{{ACM}}, pp. \bibinfo{pages}{1541--1550}, \doi{10.1145/2806416.2806434}.

\bibitemdeclare{misc}{Cox10}
\bibitem{Cox10}
\bibinfo{author}{Russ \surnamestart Cox\surnameend} (\bibinfo{year}{2010}): \emph{\bibinfo{title}{Regular Expression Matching in the Wild: A tour of {RE2}, an efficient, production regular expression implementation}}.
\newblock \urlprefix\url{https://swtch.com/~rsc/regexp/regexp3.html}.
\newblock \bibinfo{note}{Accessed: 2025-04-26}.

\bibitemdeclare{inproceedings}{DavisCSL18}
\bibitem{DavisCSL18}
\bibinfo{author}{James~C \surnamestart Davis\surnameend}, \bibinfo{author}{Christy~A \surnamestart Coghlan\surnameend}, \bibinfo{author}{Francisco \surnamestart Servant\surnameend} \& \bibinfo{author}{Dongyoon \surnamestart Lee\surnameend} (\bibinfo{year}{2018}): \emph{\bibinfo{title}{The impact of regular expression denial of service ({ReDoS}) in practice: an empirical study at the ecosystem scale}}.
\newblock In: {\slshape \bibinfo{booktitle}{Proceedings of the 2018 26th {ACM} joint meeting on {European} {Software} {Engineering} {Conference} and {Symposium} on the {Foundations} of {Software} {Engineering}}}, pp. \bibinfo{pages}{246--256}, \doi{10.1145/3236024.3236027}.

\bibitemdeclare{inproceedings}{DavisMCSL19}
\bibitem{DavisMCSL19}
\bibinfo{author}{James~C. \surnamestart Davis\surnameend}, \bibinfo{author}{Louis~G. \surnamestart Michael\surnameend, IV}, \bibinfo{author}{Christy~A. \surnamestart Coghlan\surnameend}, \bibinfo{author}{Francisco \surnamestart Servant\surnameend} \& \bibinfo{author}{Dongyoon \surnamestart Lee\surnameend} (\bibinfo{year}{2019}): \emph{\bibinfo{title}{Why aren't regular expressions a lingua franca? {An} empirical study on the re-use and portability of regular expressions}}.
\newblock In: {\slshape \bibinfo{booktitle}{Proceedings of the {ACM} {Joint} {Meeting} on {European} {Software} {Engineering} {Conference} and {Symposium} on the {Foundations} of {Software} {Engineering}, {ESEC/SIGSOFT} {FSE} 2019, Tallinn, Estonia, August 26-30, 2019}}, \bibinfo{publisher}{{ACM}}, pp. \bibinfo{pages}{443--454}, \doi{10.1145/3338906.3338941}.

\bibitemdeclare{article}{GeladeGM12}
\bibitem{GeladeGM12}
\bibinfo{author}{Wouter \surnamestart Gelade\surnameend}, \bibinfo{author}{Marc \surnamestart Gyssens\surnameend} \& \bibinfo{author}{Wim \surnamestart Martens\surnameend} (\bibinfo{year}{2012}): \emph{\bibinfo{title}{Regular Expressions with Counting: Weak versus Strong Determinism}}.
\newblock {\slshape \bibinfo{journal}{{SIAM} J. Comput.}} \bibinfo{volume}{41}(\bibinfo{number}{1}), pp. \bibinfo{pages}{160--190}, \doi{10.1137/100814196}.

\bibitemdeclare{article}{GlaunecKM23}
\bibitem{GlaunecKM23}
\bibinfo{author}{Alexis~Le \surnamestart Glaunec\surnameend}, \bibinfo{author}{Lingkun \surnamestart Kong\surnameend} \& \bibinfo{author}{Konstantinos \surnamestart Mamouras\surnameend} (\bibinfo{year}{2023}): \emph{\bibinfo{title}{Regular Expression Matching using Bit Vector Automata}}.
\newblock {\slshape \bibinfo{journal}{Proc. {ACM} Program. Lang.}} \bibinfo{volume}{7}, pp. \bibinfo{pages}{492--521}, \doi{10.1145/3586044}.

\bibitemdeclare{article}{HolikSTV23}
\bibitem{HolikSTV23}
\bibinfo{author}{Luk{\'{a}}s \surnamestart Hol{\'{\i}}k\surnameend}, \bibinfo{author}{Juraj \surnamestart S{\'{\i}}c\surnameend}, \bibinfo{author}{Lenka \surnamestart Turonov{\'{a}}\surnameend} \& \bibinfo{author}{Tom{\'{a}}s \surnamestart Vojnar\surnameend} (\bibinfo{year}{2023}): \emph{\bibinfo{title}{Fast Matching of Regular Patterns with Synchronizing Counting (Technical Report)}}.
\newblock {\slshape \bibinfo{journal}{CoRR}} \bibinfo{volume}{abs/2301.12851}.
\newblock \eprint{2301.12851}.

\bibitemdeclare{inproceedings}{KongYCGHMY22}
\bibitem{KongYCGHMY22}
\bibinfo{author}{Lingkun \surnamestart Kong\surnameend}, \bibinfo{author}{Qixuan \surnamestart Yu\surnameend}, \bibinfo{author}{Agnishom \surnamestart Chattopadhyay\surnameend}, \bibinfo{author}{Alexis~Le \surnamestart Glaunec\surnameend}, \bibinfo{author}{Yi~\surnamestart Huang\surnameend}, \bibinfo{author}{Konstantinos \surnamestart Mamouras\surnameend} \& \bibinfo{author}{Kaiyuan \surnamestart Yang\surnameend} (\bibinfo{year}{2022}): \emph{\bibinfo{title}{Software-hardware codesign for efficient in-memory regular pattern matching}}.
\newblock In \bibinfo{editor}{Ranjit \surnamestart Jhala\surnameend} \& \bibinfo{editor}{Isil \surnamestart Dillig\surnameend}, editors: {\slshape \bibinfo{booktitle}{{PLDI} '22: 43rd {ACM} {SIGPLAN} International Conference on Programming Language Design and Implementation, San Diego, CA, USA, June 13 - 17, 2022}}, \bibinfo{publisher}{{ACM}}, pp. \bibinfo{pages}{733--748}, \doi{10.1145/3519939.3523456}.

\bibitemdeclare{inproceedings}{SmithEJ08}
\bibitem{SmithEJ08}
\bibinfo{author}{Randy \surnamestart Smith\surnameend}, \bibinfo{author}{Cristian \surnamestart Estan\surnameend} \& \bibinfo{author}{Somesh \surnamestart Jha\surnameend} (\bibinfo{year}{2008}): \emph{\bibinfo{title}{{XFA:} Faster Signature Matching with Extended Automata}}.
\newblock In: {\slshape \bibinfo{booktitle}{2008 {IEEE} Symposium on Security and Privacy {(SP} 2008), 18-21 May 2008, Oakland, California, {USA}}}, \bibinfo{publisher}{{IEEE} Computer Society}, pp. \bibinfo{pages}{187--201}, \doi{10.1109/SP.2008.14}.

\bibitemdeclare{inproceedings}{SuHLCG24}
\bibitem{SuHLCG24}
\bibinfo{author}{Weihao \surnamestart Su\surnameend}, \bibinfo{author}{Hong \surnamestart Huang\surnameend}, \bibinfo{author}{Rongchen \surnamestart Li\surnameend}, \bibinfo{author}{Haiming \surnamestart Chen\surnameend} \& \bibinfo{author}{Tingjian \surnamestart Ge\surnameend} (\bibinfo{year}{2024}): \emph{\bibinfo{title}{Towards an Effective Method of {ReDoS} Detection for Non-backtracking Engines}}.
\newblock In: {\slshape \bibinfo{booktitle}{33rd USENIX Security Symposium (USENIX Security 24)}}, pp. \bibinfo{pages}{271--288}.

\bibitemdeclare{inproceedings}{TuronovaHHLVV22}
\bibitem{TuronovaHHLVV22}
\bibinfo{author}{Lenka \surnamestart Turonov{\'{a}}\surnameend}, \bibinfo{author}{Luk{\'{a}}s \surnamestart Hol{\'{\i}}k\surnameend}, \bibinfo{author}{Ivan \surnamestart Homoliak\surnameend}, \bibinfo{author}{Ondrej \surnamestart Leng{\'{a}}l\surnameend}, \bibinfo{author}{Margus \surnamestart Veanes\surnameend} \& \bibinfo{author}{Tom{\'{a}}s \surnamestart Vojnar\surnameend} (\bibinfo{year}{2022}): \emph{\bibinfo{title}{Counting in Regexes Considered Harmful: Exposing {ReDoS} Vulnerability of Nonbacktracking Matchers}}.
\newblock In: {\slshape \bibinfo{booktitle}{31st {USENIX} Security Symposium, {USENIX} Security 2022, Boston, MA, USA, August 10-12, 2022}}, \bibinfo{publisher}{{USENIX} Association}, pp. \bibinfo{pages}{4165--4182}.

\bibitemdeclare{article}{TuronovaHLSVV20}
\bibitem{TuronovaHLSVV20}
\bibinfo{author}{Lenka \surnamestart Turonov{\'{a}}\surnameend}, \bibinfo{author}{Luk{\'{a}}s \surnamestart Hol{\'{\i}}k\surnameend}, \bibinfo{author}{Ondrej \surnamestart Leng{\'{a}}l\surnameend}, \bibinfo{author}{Olli \surnamestart Saarikivi\surnameend}, \bibinfo{author}{Margus \surnamestart Veanes\surnameend} \& \bibinfo{author}{Tom{\'{a}}s \surnamestart Vojnar\surnameend} (\bibinfo{year}{2020}): \emph{\bibinfo{title}{Regex matching with counting-set automata}}.
\newblock {\slshape \bibinfo{journal}{Proc. {ACM} Program. Lang.}} \bibinfo{volume}{4}, pp. \bibinfo{pages}{218:1--218:30}, \doi{10.1145/3428286}.

\end{thebibliography}

\end{document}